\let\oldvec\vec
\let\vec\oldvec
\renewcommand{\cite}[1]{[\citet{#1}]}
\newcommand{\seclabel}[1]{\label{sec:#1}}
\newcommand{\secref}[1]{Section~\ref{sec:#1}}
\newcommand{\thmlabel}[1]{\label{thm:#1}}
\newcommand{\thmref}[1]{Theorem~\ref{thm:#1}}
\newcommand{\rmlabel}[1]{\label{rm:#1}}
\newcommand{\rmref}[1]{Remark~\ref{rm:#1}}
\newcommand{\deflabel}[1]{\label{def:#1}}
\newcommand{\defref}[1]{Definition~\ref{def:#1}}
\newcommand{\lemlabel}[1]{\label{lem:#1}}
\newcommand{\lemref}[1]{Lemma~\ref{lem:#1}}
\newcommand{\eqlabel}[1]{\label{eq:#1}}
\renewcommand{\eqref}[1]{(\ref{eq:#1})}
\newcommand{\applabel}[1]{\label{app:#1}}
\newcommand{\appref}[1]{Appendix~\ref{app:#1}}
\title{Order-Reduction Abstractions for Safety Verification of High-Dimensional Linear Systems}
\titlerunning{Order-Reduction Abstractions for Safety Verification of High-Dimensional Systems}        
\author{Hoang-Dung Tran, Luan Viet Nguyen, Weiming Xiang, Taylor T. Johnson}
\institute{University of Texas at Arlington, USA}
\date{Received: February 15, 2016 / Accepted: date}
\def\blfootnote{\xdef\@thefnmark{}\@footnotetext}
\newcommand{\loc}{\ell}
\newcommand{\upd}{\mu}
\newcommand{\constr}[1][]{\varphi\ifthenelse{\equal{#1}{}}{}{_#1}}
\newcommand{\ofsane}[1]{\ifthenelse{\equal{#1}{(}}{\errmessage{Hups, '(' as of-macro parameter?}}{}}
\newcommand{\varsofX}[1]{\ofsane{#1}\mathit{vars}(#1)}
\newcommand{\varsofconstr}[1][]{\varsofX{\upd}}
\newcommand{\iloc}[1][]{\loc_{\mathit{ini\ifthenelse{\equal{#1}{}}{}{,#1}}}}
\newcommand{\nstloc}[1][]{\loc_{\mathit{nst\ifthenelse{\equal{#1}{}}{}{,#1}}}}
\newcommand{\prosaDONE}[1]{}
\newcommand{\todo}[1]{}
\newcommand{\stan}[1]{}
\newcommand{\taylor}[1]{}
\newcommand{\hoang}[1]{}
\newcommand{\sergiy}[1]{}
\newcommand{\figlabel}[1]{\label{fig:#1}}
\newcommand{\figref}[1]{Figure~\ref{fig:#1}}
\newcommand{\tablabel}[1]{\label{tab:#1}}
\newcommand{\tabref}[1]{Table~\ref{tab:#1}}
\definecolor{light-gray}{gray}{0.95}
\def\loc{m}
\definecolor{gri}{gray}{.75}
\newcommand{\deq}{\mathrel{\stackrel{\scriptscriptstyle\Delta}{=}}}
\newcommand{\norm}[1]{\left\|#1\right\|}
\newcommand{\commentTaylor}[1]{} 
\newfont{\mycrnotice}{ptmr8t at 7pt}
\newfont{\myconfname}{ptmri8t at 7pt}
\newcommand{\deds}{{Discrete Event Dynamic Systems.}}
\journalname{\deds}
\begin{document}

\maketitle

\begin{abstract}

Order-reduction is a standard automated approximation technique for computer-aided design, analysis, and simulation of many classes of systems, from circuits to buildings. For a given system, these methods produce a reduced-order system where the dimension of the state-space is smaller, while attempting to preserve behaviors similar to those of the full-order original system. To be used as a sound abstraction for formal verification, a measure of the similarity of behavior must be formalized and computed, which we develop in a computational way for a class of linear systems and periodically-switched systems as the main contributions of this paper. We have implemented the order-reduction as a sound abstraction process through a source-to-source model transformation in the HyST tool and use SpaceEx to compute sets of reachable states to verify properties of the full-order system through analysis of the reduced-order system. Our experimental results suggest systems with on the order of a thousand state variables can be reduced to systems with tens of state variables such that the order-reduction overapproximation error is small enough to prove or disprove safety properties of interest using current reachability analysis tools. Our results illustrate this approach is effective to alleviate the state-space explosion problem for verification of high-dimensional linear systems.

\keywords{Abstraction; model reduction; order reduction; verification; reachability analysis}

\end{abstract}

\section{Introduction}
\seclabel{intro}

The state-space explosion problem is a fundamental challenge in model checking and automated formal verification that has received significant attention from the verification community.
Roughly, the state-space explosion problem is that the size of the state-space of systems scales exponentially or combinatorially with their dimensionality, which in turns causes formal computational analyses of these systems to scale similarly.
Among many solutions, abstractions based on the concepts of exact and approximate simulation and bisimulation relations are effective approaches to obtain smaller state spaces by abstracting away information that is not needed in the verification process.
Such abstractions have been applied broadly to simplify the controller synthesis and safety verification process of complex systems.
Applications of these abstractions can be found in many fields such as embedded systems~\cite{henzinger2006embedded}, biological systems~\cite{danos2004formal,regev2004bioambients,asarin2004abstraction}, continuous and hybrid systems models of cyber-physical systems (CPS)~\cite{alur2000discrete, belta2005discrete, girard2007approximation, girard2008approximate}, and stochastic systems~\cite{wang2015hscc}.

Model reduction techniques have been developed and applied widely in controls~\cite{Antoulas01asurvey}, but are typically approximations and not sound abstractions that may be used in formal verification.
From a high-dimensional (``full-order'') original system, model reduction can obtain automatically a simplified (``reduced-order'') system with lower-dimensionality that is computationally easier to, for example, analyze, design controllers for, and simulate.

A key difference between model reduction and abstraction relies on dealing with the system's initial condition and inputs. In model reduction, the inputs' values remain the same and the initial set of system states, which is an important factor in verification with reachability analysis, is usually assumed to be the zero set.
Thus, order reduction may not be sound as it is an approximation that may not have bounded errors and may be subject to numerical errors, while a guaranteed error bound (``conservative approximation'') is necessary to define a sound abstraction for verification.
In contrast, the initial condition is always taken into account and the inputs' values may change in the context of bisimulation-based abstraction.

In order to be able to use model reduction as a sound abstraction for formal verification, we need to consider both the initial conditions and inputs and then present additional reasoning to derive error bounds for how far off the executions of the reduced-order system may be from those of the full-order system.
Formalizing these issues and addressing them is the main objective of this paper, which we then use to derive an automated order-reduction abstraction that is sound, and use it to verify high-dimensional (with upwards of hundreds to thousands of state variables) continuous and periodically switched systems.

\subsection{Related Work}
Exact bisimulation relation-based abstractions for safety verification and controller synthesis have been investigated widely in the last decade~\cite{pappas2003bisimilar, van2004equivalence, tanner2003abstractions, tabuada2004bisimilar}.
In this context, the outputs of the the abstract system capture exactly the outputs of the original system.
As pointed out in~\cite{girard2008approximate, girard2007approximate}, the term of ``exact'' is not adequate when dealing with continuous and hybrid systems observed over real numbers since there may be numerical errors in observation, noise, among other nonidealities. 
To obtain an abstraction that guarantees more robust relationships between systems, approximate bisimulation relations have been proposed and studied extensively in recent years~\cite{girard2005approximate, girard2008approximate, girard2007approximate, julius2006approximate, girard2006approximate, islam2015tcs}.
The main advantage of such approximate relations is that they allow a bounded error $\delta$ which describes how far off the executions of the abstraction may be from those of the original system. 
Then, verifying whether the executions of the original system reach an unsafe region $U$ can be turned out to verify whether the executions of the abstraction (with much lower dimension) reach the $\delta$-neighborhood of the unsafe region $U$. 
Thus, finding an efficient way to determine a tight bound of the error becomes an essential task for this approach.    
In particular, a computation framework has been proposed and integrated in a Matlab toolbox called Matisse to find an abstraction from the original linear system and calculate the bound of their output mismatch~\cite{girard2007approximate}. Note that in~\cite{girard2007approximate}, the error bound is called as a precision. 
The proposed method shows a great benefit when it can deal both stable and unstable systems.
In this framework, computing the error bound is basically based on solving a set of linear matrix inequality (LMI) and optimization problem on the sets of initial states and inputs. 
The computation complexity increases polynomially along with the size of the system.
In addition, in some cases, due to the ill-condition of some matrices in computation process, the error bound computed may be very conservative and thus may produce an abstraction that is not useful for verification.

Model reduction techniques have been applied for formal verification of continuous and hybrid system~\cite{han2004reachability,han2005formal}.
These techniques rely on output reach sets, which combines the set of reachable states and an observation matrix.
This concept is useful in safety verification because, for a given system, we are usually interested in the safety requirements of some specific important states or their combinations which can be declared as the outputs of the system.
Particularly, the authors use a reduced-order model and its output error bound compared with full-order model to overapproximate the output reach set of the original system~\cite{han2004reachability}.
Thus, determining a tight bound of the error is essential.
Intuitively, the error between full-order model and its reduced-order model is composed of two separate errors.
The first error corresponds to the zero-input response (i.e, there is no control input) and the second error corresponds to the zero-state response (i.e., the initial state of the system is zero).
The authors use simulation to determine the bounds of these errors before combining them as a total bound.
The first error bound is determined by simulating the the full-order system and the reduced model from each vertice of a polyhedral initial set of states.
The advantages of simulation is it can derive tight bounds of these errors.     
The drawback of using simulation is the number of simulations increases exponentially with the dimension of the polyhedron, since the number of vertices of a polyhedron increases exponential with the dimensionality (for example, an $n$-dimensional hyperbox has $2^n$ vertices).
Thus, it may be infeasible to perform enough simulations for a high-dimensional system.

Reachability analysis of large-scale affine systems has also been investigated with Krylov subspace approximation methods to deal with state-space explosion~\cite{han2006reachability}.
However, this approach requires the input to the system to be constant, while in contrast, in our work, we consider a more general class of systems with varying inputs.

%
\paragraph*{Contributions and Organization.~~}

In this paper, we develop the order-reduction abstraction for safety verification of high-dimensional linear systems. 
The main contributions of this paper are:
(a) a computationally efficient method to derive an output abstraction from high-dimensional linear systems with an error bound for each element of the outputs, where this error is essentially the sum of two separate errors caused by the initial set of states and the control inputs; 
(b) establishing soundness of using this output abstraction to verify the safety requirements of the original system with a significantly lower computation cost;
(c) an extension of these results to a class of periodically switched linear systems; and
(d) the implementation of the methods as a model transformation pass within the HyST model transformation tool~\cite{bak2015hscc}, along with a thorough evaluation comparing our approach to similar existing order-reduction and approximate bisimulation-based abstraction methods.

Our computational framework has been tested and compared in detail with similar results through a set of benchmarks~\cite{girard2007approximate,han2004reachability}.
Our empirical evaluation illustrates that our method not only works efficiently for small and medium-dimensional systems (several to less than a hundred dimensions) as existing methods~\cite{girard2007approximate,han2004reachability}, but also can be applied to high-dimensional systems (with a hundred to a thousand dimensions) where the existing methods are infeasible to apply due to either computational complexity or finding overly conservative error bounds.
The error bound value and computation time of these different methods have been compared and discussed in our paper to show the advantages and tradeoffs of our approach.

Besides improving the computational framework, we also establish soundness of our method using the output abstraction to verify the safety specifications of the original full-order system using an approximate bisimulation relation argument.
In~\cite{girard2007approximate}, the authors use the general concept of set neighborhood to transform the safety specification of the original system.
However, in some cases when the safe and unsafe regions are described by polytopes or ellipsoids that are often used, for example in SpaceEx~\cite{frehse2011spaceex} and the Ellipsoid Toolbox~\cite{kurzhanskiy2006ellipsoidal}, a more precise transformed safety specification can be derived by our element-to-element approach.
The transformed safety specification need to satisfy the following safety relation property: (1) if the output abstraction is safe (i.e., it satisfies the transformed safety specification), then the original system is safe, and (2) if the abstraction is unsafe (it does not satisfy the transformed safety specification), then the original system is unsafe.
Since we verify safety using the output abstraction, the computation cost of the verification process is significantly reduced.
Moreover, our approach is very useful for verifying safety of high-dimensional systems that the existing verification tools may not successfully analyze directly.
This improvement is shown through our evaluation of computation complexity of safety verification for the original full-order system and its different output abstractions (\secref{casestudies}, \tabref{computation}).
Our method has been implemented as a source-to-source model transformation in the HyST tool~\cite{bak2015hscc}, which makes it easy to combine different verification tools, such as SpaceEx~\cite{frehse2011spaceex}, Flow*~\cite{chen2013flow} and dReach~\cite{kong2015dreach} to verify safety property of high-dimensional linear systems.

The remainder of the paper is organized as follows.
\secref{concept} gives definitions of output reach set, output abstraction, safety specification, safety verification problem and safety specification transformation for a class of linear time invariant (LTI) systems.
\secref{optainoutputabstraction} presents methods to find output abstractions of the LTI systems using the balanced truncation model reduction method.
\secref{verificationusingoutputabstraction} discusses how to verify safety properties for a full-order LTI system using its output abstraction.
\secref{abstraction_hybrid} extends the results to a class of periodically switched systems in which the state of the system is re-initialized at every switching instance.
\secref{casestudies} describes our implementation of the method in a prototype tool, and presents a number of examples to illustrate and evaluate the benefits of our method.

\section{Preliminaries}
\seclabel{concept}
In this section, we introduce definitions used throughout the paper including output reach sets~\cite{han2004reachability}, output abstractions, safety specifications, the safety verification problem, and safety specification transformation.
\begin{definition}
An \emph{$n$-dimensional LTI system} is denoted $M_{n}(y|\{x,u\})\langle A,B,C \rangle$ (written in short as $M_n$).
It has the following dynamic equations:
\begin{equation*}
\eqlabel{full-order}
\begin{split}
\dot{x}(t) &= A x(t) + B u(t) \\
     y(t)  &= C x(t),
\end{split}
\end{equation*}
where $x(t) \in \mathbb{R}^{n}$ is the \emph{system state}, $y(t) \in \mathbb{R}^{p}$ is the \emph{system output}, $u(t)$ is the \emph{control input}, $A \in \mathbb{R}^{n\times n}$, $B \in \mathbb{R}^{n\times m}$, and $C \in \mathbb{R}^{p\times n}$.
\end{definition}

The initial set of states of $M_{n}$ is denoted by $X_{0}(M_{n}) \subseteq \mathbb{R}^n$, and we write initial conditions as $x(0) \in X_{0}(M_n)$.
The set of control inputs of $M_{n}$ is $\textsl{U} \subseteq \mathbb{R}^m$, and we write particular controls as $u(t) \in \textsl{U}$. 
The state of $M_{n}$ is updated from state $x$ to the new state $x'$ over intervals of real time according to the linear differential equation $Ax + Bu$, and the \emph{behaviors} of the system is defined in this paper as the trajectories of the output $y(t)$ over intervals of real time.
\commentTaylor{This should really be defined more nicely (e.g., as a sequence if you like, or as executions as in standard hybrid automata papers), but whatever at this point.}

Next, we present the output reach set defined in related approaches using order reduction as a sound abstraction~\cite{han2004reachability}.
\begin{definition}{Output Reach Set~\cite{han2004reachability}.}%
Given an LTI $M_n$, a set of control inputs $\textsl{U}$, and an initial set $X_{0}(M_n)$, the \emph{output reach set at a time instant $t$} is:
\begin{equation*}
\begin{split}
R_t(M_n) &\deq \{ y(t,u,x_0) | y(t,u,x_0) = Ce^{At}x_0 + \int_{t_0}^{t}Ce^{A(t-\tau)}Bu(\tau)d\tau\},\\
&\mbox{ where}~x_0 \in X_{0}(M_n) \mbox{ and } u(t) \in \textsl{U}.
\end{split}
\end{equation*}

The \emph{output reach set over an interval of time $[t_0,t_f]$ for $t_0 \leq t_f$} is:
\begin{equation*}
\begin{split}
R_{[t_0,t_f]}(M_n) \deq \bigcup_{t\in [t_0,t_f]} R_t(M_n).
\end{split}
\end{equation*}
\end{definition}

In the remainder of the paper, we suppose $t_0 = 0$.
Next, we define an output abstraction, which is a formalization of the reduced-order system that will be used to verify properties of the full-order system.
\begin{definition}{Output Abstraction}.%
\deflabel{output_abstraction}%
The $k$-dimensional LTI system $M_{k}^{\delta}$, $p < k \leq n$ described by:
\commentTaylor{I changed to $p < k$. If you want to leave it as $0 < k$, which could be okay by also say projecting away an output variable, you need to add a remark to clarify, as otherwise it's confusing as everything else talks about doing verification with respect to outputs, so if you project away all the outputs, what happens? Also, it's a bit unclear what happens if the outputs are projected away, I guess this is handled in the safety transformation, but that definitely deserves a remark as it's a subtlety.}
\begin{equation*}
\begin{split}
\dot{x}_r(t) &= A_r x_r(t) + B_r u(t), \\
     y_r(t)  &= C_r x_r(t),
\end{split} 
\end{equation*}
where $x_r(t) \in \mathbb{R}^{k}$, $y_r(t) \in \mathbb{R}^{p}$, $A_r \in \mathbb{R}^{k\times k}$, $B_r \in \mathbb{R}^{k\times m}$, $C_r \in \mathbb{R}^{p\times k}$, is called a \emph{$k$-dimensional output abstraction of $M_{n}$} if, for the \emph{error bound} $\delta = [\delta_1, \delta_2, \ldots, \delta_p]^T$, where each $\delta_i$ is a finite positive real, we have:
\begin{enumerate}
\item $\forall x(0) \in X_{0}(M_{n})$ and $u \in \textsl{U}$, $\exists x_r(0) \in X_{0}(M_{k}^{\delta})$ such that, $\forall t \geq 0$, $\norm{y^{i}(t)-y_r^{i}(t)} \leq \delta_i$, $1 \leq i \leq p$.
\end{enumerate}
where $y^{i}(t)$ is the $i^{th}$ component of the output $y$ at time $t$, and $\norm{\cdot}$ denotes the Euclidean norm.
\commentTaylor{Hoang-dung: please confirm and make sure this notation (e.g., $y^{i}(t)$ meaning) is consistent with the rest of the paper}
\end{definition}
If we can find an output abstraction $M^{\delta}_k$, then its behaviors will approximate within $\delta$ the behaviors of the full-order system $M_n$ for all time.

\begin{definition}{Safety Specification.}
A \emph{safety specification} $S(M_n)$ of an LTI system $M_n$ formalizes the safety requirements for the output $y$ of $M_n$, and is a predicate over the output $y$ of $M_n$.
Formally, $S(M_n) \subseteq \mathbb{R}^{p}$.
\commentTaylor{Any assumptions on this set? Can this be a complex set, e.g., semialgebraic, topologically complex, etc.? If so, want to say}
\end{definition}

\begin{definition}{Safety Verification.}
The \emph{time-bounded safety verification problem} is to verify whether the system $M_n$ satisfies a safety specification $S(M_n)$ over an \emph{interval of time}.
Whether $M_n$ is safe or unsafe is defined over \emph{an interval of time} $[0,t_f]$, which is described formally in terms of the output reach set as:
\commentTaylor{Inconsistency: earlier you said only you would consider $t_0 = 0$, but now you have $t_0$ again. Also earlier you defined $t_f$ using $t_s$, keep it consistent. I fixed it.}
\begin{equation*}
\begin{split}
&R_{[0,t_f]}(M_n) \cap \neg S(M_n) = \emptyset \Leftrightarrow M_n \vDash S(M_n) , \\
&R_{[0,t_f]}(M_n) \cap \neg S(M_n) \neq \emptyset \Leftrightarrow M_n \nvDash S(M_n). 
\end{split}
\end{equation*}
\commentTaylor{In the future, set up macros for all your notation so it's easy to change. This is the reason it's important to do this, as we really need to change these as the definition of safety is only over an interval of time, not all time.}
In the remainder of the paper, we will assume $t_f$ is finite and focus on time-bounded safety verification, albeit the general framework we develop and error bounds we derive are applicable to time-unbounded verification where $t_f \rightarrow \infty$.
Under this assumption, we fix $t_f$ to some positive real.
If $M_n$ satisfies $S(M_n)$, then it is \emph{safe} and we write $M_n \vDash S(M_n)$.
If $M_n$ does not satisfy $S(M_n)$, then it is \emph{unsafe} and we write $M_n \nvDash S(M_n)$.
\end{definition}

\begin{definition}{Safety Specification Transformation.}
The \emph{safety specification transformation} is the process of finding the corresponding safety (or dually, unsafe) specification for the output abstraction $M_{k}^{\delta}$ denoted by $S(M_{k}^{\delta}) \in \mathbb{R}^{p}$ (and dually $U(M_{k}^{\delta}) \in \mathbb{R}^{p}$) from the safety specification $S(M_n)$ of the full-order system $M_n$ to guarantee the safety relation defined by:
\begin{equation}\eqlabel{safety_relation}
\begin{split}
&R_{[0,t_f]}(M_k^{\delta}) \cap \neg  S(M_k^{\delta}) = \emptyset \Rightarrow  M_n \vDash S(M_n), \\ 
&R_{[0,t_f]}(M_k^{\delta}) \cap   U(M_k^{\delta}) \neq \emptyset \Rightarrow M_n \nvDash S(M_n). 
\end{split}
\end{equation}
\end{definition}

\section{Output Abstractions from Balanced Truncation Reduction}
\seclabel{optainoutputabstraction}

The balanced truncation model reduction is an effective method to find reduced models for large scale systems.
Balanced truncation is based on Singular Value Decomposition (SVD)~\cite{moore1981principal} and uses a balanced projection to transform a system to an equivalent \emph{balanced system} where the states are arranged in descending degrees of their controllability and observability.
Informally, the degrees of controllability and observability are measures to check how controllable and observable a given system is.
For further details, we refer readers to~\cite{moore1981principal,silverman1967controllability}.
The $k$-order reduced model is then obtained by selecting the first $k$ states in the state vector and truncating (i.e., projecting away or eliminating) the other $n-k$ states.
The process of determining the reduced-order model's matrices is well-known, and it is briefly described here.
We refer readers to~\cite{moore1981principal} for further details.

\subsection{Order Reduction with Balanced Truncation Method}
\seclabel{reduction}
For an LTI system $M_{n}$, the controllability gramian $W_{c}$ and observability gramian $W_o$ of $M_n$ are the solutions of the following Lyapunov equations,%
\begin{align*}
AW_{c}+W_{c}A^{T}+BB^{T} &= 0\\
A^{T}W_{o}+W_{o}A+C^{T}C &= 0.
\end{align*}

It should be noticed that $W_{c}$ and $W_{o}$ are symmetric and positive definite.
The Hankel singular value $\sigma_{i}$ is defined as the square root of each eigenvalue $\lambda_i$ of $W_{c}W_{o}$,%
\[ \sigma_{i} = (\lambda_{i}(W_{c}W_{o})^{\frac{1}{2}}). \]
	
The first step in balanced model reduction method is to implement a balanced transformation $\tilde{x}(t) = Hx(t),~H \in \mathbb{R}^{n}$ to transform $M_n$ to an equivalent balanced system $\widetilde{M}_n$, where the controllability and observability gramian $\widetilde{W_{c}},\widetilde{W_{o}}$ satisfy:%
\[\widetilde{W_{c}} = \widetilde{W_{o}} = \Sigma = \left(
  \begin{array}{cccc}
    \sigma_{1} &  &  & \\
               & \sigma_{2}  &  & \\
     &  & \ddots & \\
	   &  &  &  \sigma_{n} \\
  \end{array}
\right),
\]\\
for $\sigma_{1} \geq \sigma_{2} \geq \sigma_{3} \geq \cdots \geq \sigma_{n-1} \geq \sigma_{n}$.
The transformation matrix $H$ can be computed as follows. 

Since $W_{c}$ is symmetric and positive definite, we can factor it as $W_{c} = GG^{T}$, where $G$ is invertible.
There exists an orthogonal transformation $K$, (i.e. $KK^{T} = I$, where $I$ is an identity matrix) such that $G^{T}W_{o}G = K\Sigma^{2}K^{T}$.
Then, the transformation matrix $H$ is defined as:%
\[H = \Sigma^{\frac{1}{2}}K^{T}G^{-1}.\] 

Applying transformation to the system $M_n$, we have the equivalent balanced system $\widetilde{M}_n$ with:%
\begin{equation*}
\begin{split}
\dot{\tilde{x}}(t) &= \tilde{A}\tilde{x}(t)+\tilde{B}u(t) \\
     y(t)  &= \tilde{C}\tilde{x}(t),
\end{split}
\end{equation*}
where $\tilde{A} = H A H^{-1}$, $\tilde{B}=H B$ and $\tilde{C} = CH^{-1}$.
The matrices of $\widetilde{M}_n$ can be partitioned as:%
\[\tilde{A} = \left(
  \begin{array}{cc}
    \tilde{A}_{11} & \tilde{A}_{12} \\
    \tilde{A}_{21} & \tilde{A}_{22} \\
  \end{array}
\right), \tilde{B} = \left(
  \begin{array}{c}
    \tilde{B}_{1} \\
    \tilde{B}_{2} \\
  \end{array}
\right), \tilde{C} = \left(
  \begin{array}{cc}
    \tilde{C}_{1} & \tilde{C}_{2} \\
  \end{array}
\right),\] \\
where $\tilde{A}_{11}\in \mathbb{R}^{k\times k}$,  $\tilde{B}_1 \in \mathbb{R}^{k\times m}$, and $\tilde{C}_1 \in \mathbb{R}^{p\times k}$, and the other matrices are of appropriate dimensionality. 
Finally, the $k$-dimensional reduced system $M_{k}$ of $M_n$ is defined as:
\begin{equation}\eqlabel{red_sys}
\begin{split}
\dot{x}_r(t) &= A_r x_r(t) + B_r u(t) \\
     y_r(t)  &= C_r x_r(t).
\end{split}
\end{equation}
where $A_r = \tilde{A}_{11}$, $B_r = \tilde{B}_1$ and $C_r = \tilde{C}_1$. 
The initial set of $M_k$ is $X_{0}(M_k) = \{SHx_0 | x_0 \in X_{0}(M_{n}) \}$, where $S = \left(
\begin{array}{cc}
    I_{k \times k} & 0_{k \times (n-k)} \\
  \end{array}
\right)$.

The balanced truncation method obtains the system matrices of the $k$-dimensional reduced system.
Next, we investigate the error between the outputs of the full-order system and its $k$-dimensional reduced system.

\subsection{Determining the Error Bound $\delta$}
\seclabel{error_bound}
The solution of a LTI system can be decomposed into two parts.
The first part corresponds to zero control input (i.e. $u(t) = 0$ for all $t$) and the second part corresponds to zero initial state (i.e., $x_0 = 0$). Note that $x_0 \equiv x(t = 0)$. 
The solutions of $M_n$ and $M_k$ are given as follows:
\commentTaylor{These really need some clarification with time, please double check.}
\begin{equation*}
\begin{split}
&y(t)= y_0 + y_{u}(t),~y_{r}(t) = y_{r_{0}}(t) + y_{r_{u}}(t) \\ 
&y_0(t) = Ce^{At}x_{0},~y_{u}(t)= \int_{0}^{t}Ce^{A(t-\tau)}Bu(\tau)d\tau \\
&y_{r_{0}}(t)= C_{r}e^{A_{r}t}x_{r_{0}},~y_{r_{u}}(t) = \int_{0}^{t}C_{r}e^{A_{r}(t-\tau)}B_{r}u(\tau)d\tau \\
&x_0(t) \in X_0(M_n),~x_{r_0}(t)\in X_0(M_k),~u(\tau) \in \textsl{U}.\\ 
\end{split}
\eqlabel{solution}
\end{equation*}
\commentTaylor{Consistency: sometimes you seem to use $x_0$ for initial conditions and sometimes you use $x(0)$. Please resolve.}

The error between the full-order system $M_n$ and its $k$-dimensional reduced system $M_k$ at time $t$ is given as follows:
\begin{equation}\eqlabel{error}
\begin{split}
e(t)~ &= y(t) - y_r(t) = e_{1}(t) + e_{2}(t), \mathrm{ where} \\ 
e_{1}(t) &= y_{0}(t) - y_{r_{0}}(t),~\mathrm{ and }~e_{2}(t) = y_u(t) - y_{r_{u}}(t).
\end{split}
\end{equation}

The error $e_1$ relates to the zero input state responses of the full-order system and the output abstraction; that is, the responses are only caused by the initial set of states.
The error $e_2$ relates to the zero state responses; that is, the responses are only caused by the control inputs.
Note that $e_1$ and $e_2$ are both time varying and $e_1(t),~e_2(t) \in \mathbb{R}^p$ where $p$ is the output dimensionality.
\commentTaylor{I don't understand a hundred percent how the error is defined. I guess you mean these are vectors? Typically errors are normed differences, so you need to clarify. E.g., it seems $e_1(t) \in \mathbb{R}^p$ and $e_2(t) \in \mathbb{R}^p$, where $p$ is the output dimensionality. If that's the case, it's worth a remark.}

To obtain our main results in computing the error bounds, in the rest of this paper, we consider the $(n+k)$-dimensional augmented system as follows, 
\begin{equation*}
\begin{split}
\dot{\bar{x}} &= \bar{A}\bar{x} + \bar{B}u = \begin{pmatrix}  \tilde{A} & 0 \\ 0 & A_r \\\end{pmatrix} \bar{x} + \begin{pmatrix} \tilde{B} \\ B_r \end{pmatrix} u, \\ 
    {\bar{y}} &= \bar{C}\bar{x} = \begin{pmatrix} \tilde{C} & -C_r \end{pmatrix} \bar{x},
\end{split}
\end{equation*}
where $\bar{x} = \begin{pmatrix} Hx & SHx \end{pmatrix}^T$.

It is easy to see that the output of the augmented system is the error between the $n$-dimension full-order system and its $k$-dimensional reduced system.
Thus, determining the error bound $\delta$ is equivalent to determining the bounds of the augmented system's outputs in which the bound of $e_1$ corresponds to the zero input response, while the bound of $e_2$ relates to the zero state response of the augmented system.  

A theoretical bound of $e_1$ can be given with the following theorem.
\commentTaylor{You said ``the theoretical bound''. That would imply it's tight. If it's not, using ``a'' is safer.}
\begin{theorem}%
\thmlabel{e1_bound}%
Let $\bar{x}_0 = \begin{pmatrix} Hx_0 & SHx_0 \end{pmatrix}^T $, then the error $e_1$ between the full-order system $M_n$ and its $k$-dimensional reduced system $M_k$ satisfies the following inequality for all $t \in \mathbb{R}_{\geq 0}$:
\begin{equation*}
\left\|e_1^i(t)\right\| \leq  \lambda_{max}(\bar{C}_i^T\bar{C}_i)  \cdot \sup_{x_0 \in X_0}  \left\|\bar{x}_{0}\right\|,~ 1 \leq i \leq p, 
\end{equation*}
where $\mathbb{R}_{\geq 0}$ is the set of non-negative real numbers, $\bar{C}_i$ is the row $i$ of the matrix $\bar{C}$, and $e_1^i(t)$ is the $i^{th}$ element of vector $e_1(t)$.
\commentTaylor{Where is time $t$? If this for all time? If so, say it. If not, you have to clarify. }
\commentTaylor{Some consistency: earlier you refer to $y(i)$ as the $i^{th}$ element of a vector $y$, while now you are using subscripts for $C_i$. I realize this is now a matrix, but still, want to keep your notation consistent throughout, so please resolve, as it's confusing using $e_1(i)$ to refer to the $i^{th}$ entry of $e_1$, but using $C_i$ to refer to the $i^{th}$ row of $C$.}
\end{theorem}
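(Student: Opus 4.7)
The plan is to rewrite $e_1(t)$ as the zero-input output of the augmented $(n+k)$-dimensional system introduced just above the theorem, and then bound it componentwise by a Rayleigh-quotient argument combined with a contractivity estimate on the augmented state trajectory.

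First I would observe that since the error $e_1$ is the output response caused only by the initial condition (the $u\equiv 0$ part of the decomposition in~\eqref{eq:error}), we can write $e_1(t) = \bar C e^{\bar A t}\bar x_0$ where $\bar x_0 = (Hx_0,\,SHx_0)^T$ lives in the augmented state space. Taking the $i$-th scalar component gives $e_1^i(t) = \bar C_i\, e^{\bar A t}\bar x_0$, and squaring yields the quadratic form
\begin{equation*}
\bigl(e_1^i(t)\bigr)^2 \;=\; \bigl(e^{\bar A t}\bar x_0\bigr)^T \bar C_i^T \bar C_i \,\bigl(e^{\bar A t}\bar x_0\bigr).
\end{equation*}
Since $\bar C_i^T \bar C_i$ is symmetric positive semidefinite, the Rayleigh-quotient inequality gives
\begin{equation*}
\bigl(e_1^i(t)\bigr)^2 \;\le\; \lambda_{\max}\bigl(\bar C_i^T \bar C_i\bigr)\;\bigl\|e^{\bar A t}\bar x_0\bigr\|^{2}.
\end{equation*}

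Next I would control the propagated state $\|e^{\bar A t}\bar x_0\|$ in terms of the initial state $\|\bar x_0\|$. Because $M_n$ is obtained from a balanced truncation of a stable system, both $\tilde A$ and $A_r=\tilde A_{11}$ are Hurwitz, hence $\bar A = \mathrm{diag}(\tilde A,A_r)$ is Hurwitz as well. This is the step where I would invoke a Lyapunov argument: solve $\bar A^T P+P\bar A = -Q$ for some $Q\succ 0$ (e.g.\ $Q=I$), exploit the block-diagonal balanced structure so that $P$ inherits the Gramian structure $\Sigma$ on the top block, and conclude a contraction-type bound $\|e^{\bar A t}\bar x_0\|\le \kappa\,\|\bar x_0\|$ valid for all $t\ge 0$, absorbing any constant into the Rayleigh-quotient factor so that the resulting bound matches the theorem statement with $\sup_{x_0\in X_0}\|\bar x_0\|$ on the right-hand side. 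Taking the supremum of the right-hand side over $x_0\in X_0$ and taking square roots yields the claimed inequality uniformly in $t$, and the fact that $t$ does not appear explicitly on the right is exactly what lets us later use this as a time-independent error bound in~\defref{output_abstraction}.

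The main obstacle will be this last step: producing a clean contractivity/boundedness estimate on $\|e^{\bar A t}\bar x_0\|$ that matches the first-power, rather than square-root, dependence on $\lambda_{\max}(\bar C_i^T \bar C_i)$ appearing in the statement. The scalar Rayleigh-quotient step most naturally gives a $\sqrt{\lambda_{\max}}$ factor, so one must either justify absorbing the remaining $\sqrt{\lambda_{\max}}$ into the trajectory bound via the balanced-Gramian $\Sigma$ structure (which links $\bar C$ to the observability Gramian through $A^T W_o+W_o A = -C^T C$), or interpret the stated inequality as a squared-form bound in disguise. Either way, the stability of $\bar A$ combined with the balanced realization is where all the work is concentrated; everything else is a clean quadratic-form computation on the augmented system.
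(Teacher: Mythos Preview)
Your overall skeleton matches the paper's: write $e_1^i(t)=\bar C_i\, e^{\bar A t}\bar x_0$, bound the resulting quadratic form by $\lambda_{\max}(\bar C_i^T\bar C_i)\,\|e^{\bar A t}\bar x_0\|^2$, then control $\|e^{\bar A t}\bar x_0\|$ in terms of $\|\bar x_0\|$. The gap is in this last step.

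The paper does \emph{not} use a generic Lyapunov equation $\bar A^T P+P\bar A=-Q$. That route gives only $\|e^{\bar A t}\bar x_0\|\le\sqrt{\lambda_{\max}(P)/\lambda_{\min}(P)}\,\|\bar x_0\|$, and there is no honest way to ``absorb'' that condition-number factor into the Rayleigh-quotient constant as you suggest. Instead, the paper exploits a specific consequence of balancing: adding the two balanced Gramian equations
\[
\tilde A\Sigma+\Sigma\tilde A^T+\tilde B\tilde B^T=0,\qquad \tilde A^T\Sigma+\Sigma\tilde A+\tilde C^T\tilde C=0
\]
produces a Lyapunov equation for the \emph{symmetric part} $\tilde A+\tilde A^T$ with negative semidefinite right-hand side, which forces $\tilde A+\tilde A^T\le 0$. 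The same argument applied to the leading $k\times k$ block gives $A_r+A_r^T\le 0$, hence $\bar A+\bar A^T\le 0$ by block-diagonality. This is precisely the hypothesis that lets you take $P=I$: with $V(\bar x)=\bar x^T\bar x$ one has $\dot V=\bar x^T(\bar A+\bar A^T)\bar x\le 0$, so $\|e^{\bar A t}\bar x_0\|\le\|\bar x_0\|$ for all $t\ge 0$ with \emph{no} extra constant. The paper packages this as a lemma on ``monotonic convergence.'' You gestured at the Gramian structure but did not isolate this dissipativity of the balanced $\tilde A$, which is the one idea doing all the work.

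On your closing worry about $\lambda_{\max}$ versus $\sqrt{\lambda_{\max}}$: you are correct that the Rayleigh step yields $|e_1^i(t)|^2\le\lambda_{\max}(\bar C_i^T\bar C_i)\,\|\bar x_0\|^2$, and that is exactly the inequality the paper's appendix actually establishes. The first-power form in the theorem statement is a presentational inconsistency in the paper, not something to be engineered via a sharper trajectory bound; do not spend effort trying to reconcile it.
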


The proof of~\thmref{e1_bound} is given in~\appref{Appendix_alg}. 
 
Although the computation cost of the theoretical bound of $e_1$ is small since it only relates to determining $\sup_{x_0 \in X_0} \left\| \bar{x}_{0} \right\|$, the result may be very conservative in the case that the initial set of states is far from the zero point. 
Thus, as can be seen in~\secref{casestudies}, it is efficient to use~\thmref{e1_bound} to compute the bound of $e_1$ if the initial set of states is close to zero point. 

To reduce the conservativeness of~\thmref{e1_bound}, we also propose an optimization method to compute a tighter bound of $e_1$. 
Nevertheless, the computation cost of the optimization method is larger than using~\thmref{e1_bound}. 
This is the tradeoff between obtaining an accuracy bound of $e_1$ and improving the computation time.  
An optimization method is given in the following theorem. 
\begin{theorem}%
\thmlabel{e1_bound_opt}%
Let $\bar{x}_0 = \begin{pmatrix} Hx_0 & SHx_0 \end{pmatrix}^T $ and $P_0 > 0$ is the solution of the following optimization problem: 
\begin{equation*}
\begin{split}
&P_0 = min(trace(P))~subject~to\\
&P > 0,~\bar{A}^TP+PA <0,~\bar{C}_i^T\bar{C}_i \leq P \\ 
\end{split}
\end{equation*} 
where $\bar{C}_i$ is the row $i$ of the matrix $\bar{C}$.
Then, the error $e_1$ between the full-order system $M_n$ and its $k$-dimensional reduced system $M_k$ satisfies the following inequality for all $t \in \mathbb{R}_{\geq 0}$:
\begin{equation*}
\left\|e_1^i(t) \right\| \leq \sup_{x_0 \in X_0} \sqrt{\bar{x}_{0}^TP_0\bar{x}_0},~ 1 \leq i \leq p, 
\end{equation*}
where $e_1^i(t)$ is the $i^{th}$ element of vector $e_1(t)$.
\commentTaylor{Again, for all time?}
\end{theorem}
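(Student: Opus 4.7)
The plan is to establish the bound via a quadratic Lyapunov function argument applied to the zero-input portion of the augmented system $\dot{\bar{x}}(t) = \bar{A}\bar{x}(t)$, whose output $\bar{C}\bar{x}(t)$ is precisely $e_1(t)$. The optimization variable $P$ will play the role of a Lyapunov matrix, and the LMI constraints in the statement exactly encode (i) that $V(\bar{x}) = \bar{x}^T P \bar{x}$ is a genuine Lyapunov function decreasing along trajectories of the augmented system, and (ii) that each output component is dominated by $V$.

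First I would observe that, since the error $e_1(t)$ comes from the zero-input response, the augmented dynamics reduce to $\bar{x}(t) = e^{\bar{A}t}\bar{x}_0$ with $\bar{x}_0 = (Hx_0, SHx_0)^T$, and $e_1^i(t) = \bar{C}_i \bar{x}(t)$. Define the candidate Lyapunov function $V(t) \deq \bar{x}(t)^T P_0 \bar{x}(t)$. Its time derivative along trajectories is
\begin{equation*}
\dot{V}(t) = \bar{x}(t)^T \bigl(\bar{A}^T P_0 + P_0 \bar{A}\bigr) \bar{x}(t) \le 0,
\end{equation*}
by the Lyapunov LMI constraint. Hence $V(t) \le V(0) = \bar{x}_0^T P_0 \bar{x}_0$ for all $t \geq 0$.

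Next I would connect $V$ to $e_1^i$. Since $\bar{C}_i^T \bar{C}_i \preceq P_0$ in the Loewner order, I have for every $t$
\begin{equation*}
\bigl(e_1^i(t)\bigr)^2 = \bar{x}(t)^T \bar{C}_i^T \bar{C}_i\, \bar{x}(t) \le \bar{x}(t)^T P_0 \bar{x}(t) = V(t) \le \bar{x}_0^T P_0 \bar{x}_0.
\end{equation*}
Taking square roots and then the supremum over $x_0 \in X_0$ yields exactly the claimed element-wise bound. Finally, since the inequality $\|e_1^i(t)\| \le \sup_{x_0}\sqrt{\bar{x}_0^T P \bar{x}_0}$ holds for \emph{every} feasible $P$ of the LMI system, minimizing $\mathrm{trace}(P)$ simply selects a feasible $P_0$ that tends to give a tight bound while keeping the optimization convex (it is a standard SDP with scalar linear objective and LMI constraints).

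The main obstacle I anticipate is justifying feasibility and well-posedness of the SDP and handling the strict inequalities carefully: the Lyapunov inequality $\bar{A}^T P + P \bar{A} \prec 0$ is feasible precisely when $\bar{A}$ is Hurwitz, which must be argued from the fact that balanced truncation applied to a stable $M_n$ yields a stable reduced system, so the block-diagonal $\bar{A} = \mathrm{diag}(\tilde{A}, A_r)$ inherits asymptotic stability. A minor technical point is translating the strict LMI into the non-strict bound $V(t) \le V(0)$; this follows because $\dot{V}(t) \le 0$ along every solution, so monotonicity of $V$ along trajectories gives the required inequality, and the strict negativity additionally guarantees $V(t) \to 0$ as $t \to \infty$, consistent with $e_1$ decaying to zero.
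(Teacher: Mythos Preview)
Your proposal is correct and follows essentially the same Lyapunov argument as the paper: define $V(\bar{x})=\bar{x}^T P_0 \bar{x}$ for the zero-input augmented system, use $\bar{A}^T P_0 + P_0 \bar{A}\prec 0$ to get $V(t)\le V(0)$, and then invoke $\bar{C}_i^T\bar{C}_i \preceq P_0$ to bound $(e_1^i(t))^2$ by $\bar{x}_0^T P_0 \bar{x}_0$. Your additional remarks on SDP feasibility (Hurwitzness of $\bar{A}$ via stability of balanced truncation) and on the role of the trace objective go slightly beyond what the paper spells out, but they are correct and do not change the approach.
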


The proof of~\thmref{e1_bound_opt} is given in~\appref{Appendix_alg}.

Now, we consider how to determine the bound of $e_2$, which corresponds to the zero state responses of the augmented system.
The theoretical bound of $e_2$ is obtained in the following theorem by exploiting the concept of bounded input bounded output stability (BIBO) and the $L_1$ error bound in the impulse response of balanced truncation model reduction~\cite{obinata2012model}. 

\begin{theorem}%
\thmlabel{e2_bound}%
The error $e_2$ between the full-order system $M_n$ and its $k$-dimensional reduced system $M_k$ satisfies the following inequality for all $t \in \mathbb{R}_{\geq 0}$:
\begin{equation*}
\left\|e_2^i(t) \right\| \leq  (2\sum_{j=k+1}^{n}(2j-1)\sigma_j) \cdot \left\| u \right\|_{\infty},~ 1 \leq i \leq p, 
\eqlabel{output_bound}
\end{equation*}
where $e_2^i(t)$ is the $i^{th}$ element of vector $e_2(t)$.
\commentTaylor{for all time?}
\end{theorem}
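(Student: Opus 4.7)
The plan is to reduce the theorem to a known $L_1$ impulse-response bound from balanced truncation theory applied to the $(n+k)$-dimensional augmented error system introduced just above the theorem. Recall that $e_2(t) = y_u(t) - y_{r_u}(t)$ is exactly the output of the augmented system $(\bar A,\bar B,\bar C)$ driven by $u$ with zero initial state, because the augmented output was defined as $\bar y = \tilde C \tilde x - C_r x_r$, and the zero-state responses of the balanced system $\widetilde M_n$ and the reduced system $M_k$ coincide with those of $M_n$ and the balanced truncation reduction, respectively.

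The first step is to write, for each output index $i$,
\begin{equation*}
e_2^i(t) \;=\; \int_0^t \bar C_i \, e^{\bar A(t-\tau)} \bar B \, u(\tau)\, d\tau ,
\end{equation*}
where $\bar C_i$ is the $i$-th row of $\bar C$. Taking norms, applying the triangle inequality under the integral, and pulling $\|u(\tau)\| \leq \|u\|_\infty$ outside gives
\begin{equation*}
\|e_2^i(t)\| \;\leq\; \|u\|_\infty \cdot \int_0^t \bigl\| \bar C_i \, e^{\bar A s} \bar B \bigr\|\, ds
\;\leq\; \|u\|_\infty \cdot \int_0^{\infty} \bigl\| \bar C_i \, e^{\bar A s} \bar B \bigr\|\, ds .
\end{equation*}
The remaining quantity is the $L_1$-norm of the $i$-th row of the impulse response of the augmented error system, which is bounded above by the $L_1$-norm of the full impulse response matrix.

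The second step is to invoke the standard $L_1$ error bound for balanced truncation, which (see \cite{obinata2012model}) states that the $L_1$ norm of the impulse response of the error system between a stable balanced system and its order-$k$ truncation is bounded by $2\sum_{j=k+1}^{n}(2j-1)\sigma_j$ in terms of the truncated Hankel singular values. Combining this with the previous display yields the claimed inequality for every $1 \leq i \leq p$ and all $t \geq 0$.

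The main obstacle is the middle step: making precise that the augmented system $(\bar A,\bar B,\bar C)$ is exactly the error system whose $L_1$ bound is proved in the cited reference, in particular verifying that the block-diagonal structure of $\bar A$ and the sign convention $\bar C = (\tilde C\ -C_r)$ give the correct sign for the error and that stability of $\bar A$ (needed for the improper integral to converge) is guaranteed by the assumed stability of the original and of the truncated subsystem $\tilde A_{11}$. Once these are checked, applying the cited $L_1$ bound is direct, and the per-component inequality $\|e_2^i(t)\| \leq \| \bar C_i e^{\bar A \cdot}\bar B\|_{L_1} \|u\|_\infty$ is upper-bounded by the full-matrix $L_1$ bound, concluding the proof.
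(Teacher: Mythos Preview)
Your proposal is correct and follows essentially the same route as the paper's proof: write $e_2$ as a convolution with the impulse-response difference, bound by $\|u\|_\infty$ times the $L_1$-norm of that difference, and invoke the cited balanced-truncation $L_1$ bound $2\sum_{j=k+1}^{n}(2j-1)\sigma_j$. Your formulation via the augmented system $(\bar A,\bar B,\bar C)$ is equivalent to the paper's direct use of $\tilde C e^{\tilde A s}\tilde B - C_r e^{A_r s} B_r$ because of the block-diagonal structure of $\bar A$, and you are slightly more careful than the paper in isolating the per-component estimate and the stability hypothesis.
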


The proof of~\thmref{e2_bound} is given in~\appref{Appendix_alg}. 

\begin{remark}
\rmlabel{e2_rm}
As can be observed from~\thmref{e2_bound}, the theoretical bound of $e_2$ depends on the singular value $\sigma_j,~k+1\leq j \leq n$. 
Therefore, in the case of the singular values are large, the theoretical bound of $e_2$ becomes large and may be not useful.
Moreover, ~\thmref{e2_bound} derives the same error bounds $e_2^i$ for each pair $(y_u^i,~y_{r_u}^i)$ for each dimension $1\leq i \leq p$. 
Thus,~\thmref{e2_bound} may be more useful for systems that have high dimensions and small singular values. 
\end{remark}
After determining the bounds $e_1$ and $e_2$, the overall error bound $\delta = [\delta_1, \delta_2,...,\delta_p]^T$ between the outputs of $M_n$ and $M_k$ obtained from~\eqref{error} can be expressed as follows:
\begin{equation}\eqlabel{delta}
\left\|y_i(t)-y_{r,i}(t)\right\| \leq \delta_i,~1 \leq i \leq p.
\end{equation}
 where $ \delta_i= \left\|e_1^i(t)\right\| + \left\| e_2^i(t)\right\|$.
\commentTaylor{Fix time aspect here, inconsistent with no times on earlier ones.}

\begin{remark}
We note that the bound $\delta$ can be obtained using different methods in which each method has both benefits and drawbacks.
For example, in contrast to our above results, in~\cite{han2004reachability}, the authors propose a simulation-based approach to determine these error bounds.
To determine the bound of $e_1$, the author simulate the full-order system and the reduced system from each vertex in a polyhedral representation of the initial set of states.
This method gives a very tight bound of $e_1$.
The drawback is the number of simulations may explode.
For example, if the initial set is a hypercube in $100$-dimensions, we have to simulate the full-order system and its reduced system with $2^n = 2^{100}$ vertices, which is infeasible even if each simulation takes little time.
The bound of $e_2$ is determined by integrating the norm of the impulse response of the augmented system via simulation.
This method is useful since it gives a tight bound of $e_2$ with only $m$ simulations, where $m$ is the number of inputs.
In a different way without separately computing the bounds of $e_1$ and $e_2$, the error bound $\delta$ can be calculated by solving a set of LMI optimization problem on sets of initial states and inputs~\cite{girard2007approximate}.
This approach shows advantages when dealing with small and medium-dimensional systems (less than $50$ dimensions) and it works for both stable and unstable systems.
When the system dimension is large, the error bound obtained is overly conservative and may not useful.
\end{remark}

\paragraph*{Discussion.~~}
Although simulation-based methods can be used to determine the bounds of $e_1$ and $e_2$, numerical issues in simulation may lead to unexpected results (unsound results) which are smaller than the actual error bounds.
Let us clarify the problem first and then propose a technique under an assumption to make the result obtained via simulation sound.
This problem has not been addressed previously in~\cite{han2004reachability}.

Assume that the actual values of the bounds of $e_1$ and $e_2$ are $\bar{e}_{1}$ and $\bar{e}_{2}$ respectively, and the values of error bounds we get from simulation are $\tilde{e}_1$ and $\tilde{e}_2$. 
The numerical inaccuracy in simulation can be formulated as $\bar{e}_1 = \tilde{e}_1\pm \epsilon_1$ and $\bar{e}_2 = \tilde{e}_2\pm \epsilon_2$.
The actual overall bound $\delta$ is $\bar{e}_{1} + \bar{e}_{2}$ which satisfies the following constraint:
\begin{equation*}
\delta = \bar{e}_{1} + \bar{e}_{2} = \tilde{e}_1 + \tilde{e}_2 \pm \epsilon_1 \pm \epsilon_2.
\end{equation*}

From the above equation, it is easy to see that if we use the simulation bounds of $e_1$ and $e_2$ to calculate $\delta$, then the result may be unsound due to numerical issues (i.e. if $\pm \epsilon_1 \pm \epsilon_2 > 0$). 
To handle this, we can assume that the absolute numerical error in simulation $|\epsilon_i|,~i = 1,2$ is smaller than $\gamma$ percent of the simulation value $\tilde{e}_i,i=1,2$.
Then, the simulation error bound can be used as a sound result by bloating the simulation error bound using following equation.:

\begin{equation*}
\delta = (1+\gamma)(\tilde{e}_1 + \tilde{e}_2).
\end{equation*}
\commentTaylor{This is weak, fix the $\%$ thing finally. Use a decimal if necessary, but this percentage thing is weird as that is typically mod and it will be confusing.}

The soundness of error bounds $\delta$ computed using~\thmref{e1_bound},~\thmref{e1_bound_opt},~\thmref{e2_bound}, and the methods of~\cite{girard2007approximate} are guaranteed since these methods do not have numerical issues that may arise in simulation-based methods.   

\subsection{Output abstraction and $\delta$-approximation relation}

\begin{lemma}\lemlabel{output_abstraction}
Given an asymptotically stable LTI system $M_n$, there exists a $k$-dimensional output abstraction $M^{\delta}_{k}$ of $M_{n}$.
\end{lemma}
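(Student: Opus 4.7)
\textbf{Proof proposal for \lemref{output_abstraction}.}
The plan is constructive: I would build the candidate abstraction $M_k^\delta$ using the balanced truncation procedure described in \secref{reduction}, specify its initial set by the map $x_r(0) = SHx(0)$, and then verify that the two conditions in \defref{output_abstraction} hold with a finite component-wise error bound $\delta$ provided by \thmref{e1_bound} (or \thmref{e1_bound_opt}) together with \thmref{e2_bound}.

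First I would invoke asymptotic stability of $M_n$: since $A$ is Hurwitz, the Lyapunov equations for the controllability and observability gramians $W_c$ and $W_o$ admit unique symmetric positive-definite solutions, so the balancing transformation $H = \Sigma^{1/2} K^{T} G^{-1}$ is well-defined and invertible, and the balanced realization $\widetilde{M}_n$ together with its truncation $M_k$ with matrices $A_r = \tilde{A}_{11}$, $B_r = \tilde{B}_1$, $C_r = \tilde{C}_1$ is well-defined. It is standard (and used implicitly in \thmref{e2_bound}) that balanced truncation of an asymptotically stable system yields an asymptotically stable reduced system, so $A_r$ is Hurwitz as well. Define the initial set of $M_k$ as $X_0(M_k) = \{ SHx_0 \mid x_0 \in X_0(M_n) \}$, exactly as in \secref{reduction}. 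Then for every $x(0) \in X_0(M_n)$ the witness $x_r(0) = SHx(0)$ lies in $X_0(M_k)$, which discharges the existential in \defref{output_abstraction}.

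Next I would bound the output mismatch. Using the decomposition $y(t) - y_r(t) = e_1(t) + e_2(t)$ from \eqref{error}, the triangle inequality gives
\begin{equation*}
\norm{y^i(t) - y_r^i(t)} \leq \norm{e_1^i(t)} + \norm{e_2^i(t)}, \quad 1 \leq i \leq p.
\end{equation*}
Applying \thmref{e1_bound} (or \thmref{e1_bound_opt} for a tighter value) and \thmref{e2_bound}, each summand is bounded uniformly in $t \geq 0$ by a finite quantity, since the Lyapunov-based bound on $e_1$ uses only the largest eigenvalue of $\bar{C}_i^T \bar{C}_i$ and a supremum of $\norm{\bar{x}_0}$ over a bounded initial set, and the bound on $e_2$ is the finite Hankel-singular-value tail sum $2 \sum_{j=k+1}^n (2j-1)\sigma_j$ times $\norm{u}_\infty$. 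Setting $\delta_i = \sup_{t \geq 0} (\norm{e_1^i(t)} + \norm{e_2^i(t)})$ yields a finite positive real for each $i$, and the vector $\delta = [\delta_1,\dots,\delta_p]^T$ then satisfies clause (1) of \defref{output_abstraction}.

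The main obstacle in turning this sketch into a rigorous proof is justifying that the bounds stay finite uniformly in $t$: this requires that both $\bar{A}$ (or at least its two diagonal blocks $\tilde{A}$ and $A_r$) be Hurwitz, so that $e^{\bar{A}t}$ decays and the integral in $e_2$ converges, and that $\sup_{u \in \textsl{U}} \norm{u}_\infty$ is finite, which must be assumed on $\textsl{U}$ (as it is implicitly in \thmref{e2_bound}). Stability of $A_r$ follows from the preservation-of-stability property of balanced truncation once all Hankel singular values are strictly positive; if some $\sigma_j$ equals zero one can first eliminate the uncontrollable/unobservable subspace without changing input--output behavior, and then truncate, so the construction goes through with no change to the stated error bound.
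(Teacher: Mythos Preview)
Your proposal is correct and follows essentially the same route as the paper: construct $M_k$ via balanced truncation, take the witness $x_r(0)=SHx(0)$, and invoke the component-wise bounds from \thmref{e1_bound}/\thmref{e1_bound_opt} and \thmref{e2_bound} to obtain a finite $\delta$. Your write-up is in fact more careful than the paper's brief proof, which simply cites \secref{reduction} and \secref{error_bound} without spelling out the stability and boundedness hypotheses you flag in your final paragraph.
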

\begin{proof}
Assume that we have an asymptotically stable LTI system $M_n$, using balanced truncation method in~\secref{reduction}, we can obtain a $k$-dimensional reduced system $M_k$.
Moreover, from~\secref{error_bound}, for any $x(0)\in X_0(M_n)$, there exists $x_r(0) = SHx(0)\in X_0(M_k)$ such that the distance between each pair of output $(y_i(t),y_{r,i}(t))$ of the two systems is bounded by a finite positive real $\delta_i$, and this applies in every dimension $1 \leq i \leq p$~\eqref{delta}.
Hence, we can conclude that there exists a $k$-dimensional output abstraction $M_{k}^{\delta}(y_r|\{x_r,u\})$ $\langle A_r,B_r,C_r \rangle$ of $M_n$.
\end{proof}

There is a relationship between the output abstraction and $\delta$-approximate (bi)simulation relations~\cite{girard2007approximate} given as follows. 

Consider two dynamic systems:
\begin{align*}
\Sigma:~\dot {x}(t) &= f_1(x(t),u(t)), \\
y(t)& = g_1(x(t)), \\
\tilde \Sigma:~\dot {\tilde x}(t) &= f_2(\tilde{x}(t),u(t)), \\
\tilde y(t)& = g_2(\tilde{x}(t)).
\end{align*}

The central notion of approximate bisimulation is to characterize and quantify the distance between the outputs $y(t)$ and $\tilde y(t)$ generated by system $\Sigma$ and $\tilde \Sigma$ with the same input $u(t)$.
%

\begin{definition} \cite{girard2007approximate}\deflabel{bisim}
A relation $\mathscr {R}_\delta \subseteq \mathbb{R}^{n_x} \times \mathbb{R}^{\tilde n_x}$ is called a $\delta$-approximate bisimulation relation between systems $\Sigma$ and $\tilde \Sigma$, of \emph{precision} $\delta$, if, $\forall t \in \mathbb{R}_{\ge 0}$ and for all $(x(t),\tilde x(t)) \in \mathscr {R}_\delta$:
\begin{enumerate}
\item $\left\| {y(t)-\tilde y(t)} \right\| \le \delta$,
\item $\forall u(t) \in \mathcal{U}$, $\forall$ solutions $x(t)$ of $\Sigma$, $\exists$ a corresponding solution $\tilde x(t)$ of $\tilde \Sigma$ such that $(x(t),\tilde x(t)) \in \mathscr{R}_\delta$,
\item $\forall u(t) \in \mathcal{U}$, $\forall$ solutions $\tilde x(t)$ of $\tilde \Sigma$, $\exists$ a corresponding  $x(t)$ of $ \Sigma$ such that $(x(t),\tilde x(t)) \in \mathscr{R}_\delta$.
\end{enumerate}
If these conditions are met, we say systems $\Sigma$ and $\tilde \Sigma$ are approximately bisimilar with precision $\delta$, denoted by $\Sigma \sim_\delta \tilde \Sigma$.
\commentTaylor{You have to fix the ``satisfies'' thing. What does that mean?}
\end{definition}

Parameter $\delta$ measures the similarity of two systems $\Sigma$ and $\tilde \Sigma$.
In particular, $\mathscr {R}_0$ with $\delta=0$ recovers the exact bisimulation relation.
However, in most situations, the value of $\delta$ has to be greater than zero for two bisimilar systems, then the problem of calculating a tight estimate of $\delta$ is of the most importance in using approximate bisimulation relations for verification.

In this context, we relate the precision $\delta$ with the overall error bound $\delta = [\delta_1, \delta_2,\ldots,\delta_p]^T$ developed earlier~\eqref{delta}, to obtain the following proposition to establish an approximate bisimulation relation between the full-order system $M_n$ and its output abstraction $M_{k}^{\delta}$.

\begin{proposition}
For the full-order LTI system $M_{n}(y|\{x,u\})\langle A,B,C \rangle$ and the $k$-reduced order system $M_{k}(y_r|\{x_r,u\})\langle A_r,B_r,C_r \rangle$ by~\eqref{red_sys}, there exists an approximate bisimulation relation $\mathscr {R}_\rho$ such that $M_n \sim_\rho M_k$, where $\rho =\left\|\delta\right\|$.
\end{proposition}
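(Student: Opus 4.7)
The plan is to construct an explicit candidate relation $\mathscr{R}_\rho \subseteq \mathbb{R}^n \times \mathbb{R}^k$ and verify the three defining conditions of~\defref{bisim} in turn. The natural candidate, suggested by the balanced-truncation construction in~\secref{reduction} and~\lemref{output_abstraction}, is the relation generated by matched initial pairs and propagated along joint trajectories driven by the same input, namely
\begin{equation*}
\mathscr{R}_\rho \deq \{(x(t),x_r(t)) : x(0) \in X_0(M_n),\ x_r(0) = SH x(0),\ u(\cdot)\in \textsl{U},\ t \ge 0\}.
\end{equation*}
This is the direct lift of the initial-state correspondence $x_0 \mapsto SHx_0$ to the full state space, so membership in $\mathscr{R}_\rho$ is preserved by the flows of $M_n$ and $M_k$ whenever they are driven by a common input.

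Verifying condition (1) is the routine part: for any $(x(t),x_r(t)) \in \mathscr{R}_\rho$, the componentwise bound~\eqref{delta} developed in~\secref{error_bound} gives $|y^i(t)-y_r^i(t)| \le \delta_i$ for $1 \le i \le p$, and aggregating in the Euclidean norm yields
\begin{equation*}
\left\|y(t)-y_r(t)\right\| = \sqrt{\sum_{i=1}^p \left(y^i(t)-y_r^i(t)\right)^2} \le \sqrt{\sum_{i=1}^p \delta_i^2} = \left\|\delta\right\| = \rho.
\end{equation*}

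Conditions (2) and (3) are handled by the construction of $\mathscr{R}_\rho$ itself. For (2), given any $u \in \textsl{U}$ and any solution $x(t)$ of $M_n$ with $x(0) \in X_0(M_n)$, I choose $x_r(0) = SHx(0) \in X_0(M_k)$ and let $x_r(t)$ be the corresponding trajectory of $M_k$ under the same $u$; by definition $(x(t),x_r(t)) \in \mathscr{R}_\rho$ for all $t \ge 0$. For (3), I exploit the fact that $X_0(M_k) = \{SHx_0 : x_0 \in X_0(M_n)\}$ (see~\secref{reduction}), so every $x_r(0) \in X_0(M_k)$ arises as $SHx(0)$ for some $x(0)\in X_0(M_n)$; I then run $M_n$ from that preimage under the same input and obtain the required matching trajectory.

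The only subtle point, and the main thing that deserves care, is that conditions (2) and (3) of~\defref{bisim} must hold for the pair at \emph{every} $t \ge 0$, not merely at $t=0$. This reduces to the time-uniformity of the error bounds, which is exactly the content of~\thmref{e1_bound} (or~\thmref{e1_bound_opt}) and~\thmref{e2_bound}: both the zero-input and the zero-state contributions to $e(t) = e_1(t)+e_2(t)$ are bounded by constants independent of $t$. Once this uniformity is made explicit, the three conditions close and $M_n \sim_\rho M_k$ with $\rho = \|\delta\|$ follows.
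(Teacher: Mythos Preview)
Your proposal is correct and follows essentially the same approach as the paper: both arguments reduce condition~(1) to the componentwise bounds~\eqref{delta} and aggregate them in the Euclidean norm to obtain $\|y(t)-y_r(t)\|\le\|\delta\|=\rho$, invoking the time-uniform bounds of Theorems~\ref{thm:e1_bound}/\ref{thm:e1_bound_opt} and~\ref{thm:e2_bound}. Your treatment is in fact more explicit than the paper's, since you write down the relation $\mathscr{R}_\rho$ and verify conditions~(2) and~(3) via the initial-state correspondence $x_0\mapsto SHx_0$, whereas the paper simply asserts that the relation is established once the norm bound is in hand.
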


\begin{proof}
For output $y(t)$ and $y_r(t)$ generated by $M_{n}$ and $M_{k}$, we have:
\begin{equation*}
\begin{split}
\left\|y(t)-y_r(t)\right\| =\sqrt{\sum\nolimits_{1}^{p}{(y^i(t)-y_{r}^{i}(t))^2}}.
\end{split}
\end{equation*}

Then, with the error bound $\delta_i$, $i=1,2\ldots,p$, computed by~\eqref{delta}, we have:
\begin{equation*}
\left\|y(t)-y_r(t)\right\| \leq \sqrt{\sum\nolimits_{1}^{p}{\delta_i^2}}=\left\|\delta\right\|=\rho.
\end{equation*}

According to~\defref{bisim}, and by either~\thmref{e1_bound} or~\thmref{e1_bound_opt} for the $e_1$ error bound and by~\thmref{e2_bound} for the $e_2$ error bound, the approximate bisimulation relation $\mathscr{R}_\rho$ with precision $\rho$ is established.
\commentTaylor{Refer to your earlier theorems where you establish that these error bounds are sound. Otherwise, why do you have these earlier theorems? Use your existing results.}
\commentTaylor{Also, you need to have a fixed constant bound for $e_1$ and $e_2$, right? This is why the time aspect is very confusing: you need the max over all time for these bounds for them to be sound. Make this clear earlier and now, and you should probably just have something here that is like: $\rho = e_1 + e_2$ or whatever makes sense.}

\end{proof}

\begin{remark}
It should be emphasized that there is a difference between the output abstraction and the $\delta$-approximate bisimulation relation since we compute the distance element to element between the outputs of two systems, i.e., $\left\|y_i - y_{r,i}\right\| \leq \delta_i$.
Our more precise result can produce a tighter transformed safe and unsafe specifications, which will be clarified in the next section.
\end{remark}

\subsection{Computational time complexity to compute the error bound}
Assume that the average time for one simulation is $\bar{t}$, then the time for the simulation-based approach~\cite{han2004reachability} to compute the error bound will be $\bar{t} \times N$, where $N$ is the total number of simulations. 
To analyze the time complexity of the simulation-based approach, we need to determine $N$.
For an $n$-dimension system with $m$ inputs and $p$ outputs, the number of vertices in polyhedral initial set is $2^n$.
Therefore, the number of simulations that need to be done to determine the bound of $e_1$ is $2^n$.
Similarly for $e_2$, as discussed in the previous section, the number of simulations for determining the bound of $e_2$ is $m$. 
Overall, the number of simulations $N$ need to be done in the worst case is $N = 2^n + m$, and the overall simulation time needed is $O(\bar{t} \times (2^n + m))$.

In~\cite{girard2007approximate}, to compute the error bound, this method solves two LMI and quadratic optimization problems on the sets of initial state and inputs. 
To estimate the time complexity of this method, we need to calculate the number of decision variables first. 
For the $n$-dimensions system, the number of decision variables is $(n^2+n)/2$. 
The number of LMI constraints related to this method is $2$.
Consequently, the time complexity for solving two LMI constraints using interior point algorithms can be estimated by $O([(n^2+n)/2]^{2.75}\times 2^{1.5})$~\cite{vandenberghe1994positive,nesterov1994interior}.
The time complexity for solving the optimization problem in~\cite{girard2007approximate} using interior point algorithms is $O([(n^2+n)/2]^3)$~\cite{vandenberghe1994positive,nesterov1994interior}. 
Totally, the time complexity in computing the error bound of the method proposed in~\cite{girard2007approximate} is $O([(n^2+n)/2]^3) + O([(n^2+n)/2]^{2.75}\times 2^{1.5})$, which can be bounded as $O(n^6)$.
\commentTaylor{A bit confused, are you sure this is $n^6$, it seems possibly high (but maybe not outlandish)? If you got these complexity estimates from a textbook, I guess it may be safe to use them.}

In our approach, it is easy to see that~\thmref{e1_bound} computation mainly relates to solving the optimization problem to find $\sup_{x_0 \in X_0} \left\| \bar{x}_{0} \right\|$.
This optimization problem can be done in two steps. 
The first step is to find the upper bound and lower bound of $\tilde{x}_0$ by solving the linear optimization problem defined by $min(max)Hx_0,~x_0 \in X_0$.
\commentTaylor{Refer back to your theorem...?}
Then, the supremum of the Euclidean-norm of $\bar{x}_0$ can be easily obtained.
If we use interior point algorithms for this problem, the time complexity of our approach using~\thmref{e1_bound} is $O((n+1)^{3.5})$~\cite{nesterov1994interior}. 
If we use the optimization method proposed in~\thmref{e1_bound_opt}, we first need to solve the eigenvalue problem (EVP) subject to two matrix inequalities that has time complexity $O([((n+k)^2+n+k)/2]^{2.75}\times 2^{1.5})$ if using interior point algorithms~\cite{vandenberghe1994positive}.  
Then, we need to solve the quadratic optimization problem that has time complexity $O((n+k)^3)$ if we use the interior point algorithm~\cite{ye1989extension}.  
Note that the computation cost of~\thmref{e2_bound} in our approach is small compared to~\thmref{e1_bound} and~\thmref{e1_bound_opt}. 

\tabref{time_complex} shows the simplified time complexity analysis of different approaches to compute the error bound $\delta$. 
As can be seen from the above discussion and~\tabref{time_complex}, in terms of time complexity, our approach using~\thmref{e1_bound} and~\thmref{e2_bound} is more efficient when dealing with high-dimensional systems while using~\thmref{e1_bound_opt} does not improve the time complexity.
The computation time of different methods are measured and discussed in detail in~\secref{casestudies}.  
\begin {table}
\centering
\small
  \begin{tabular}{ |c|c| }
    \hline
		\cite{girard2007approximate} & $O(n^6)$ \\ \hline
		\cite{han2004reachability}   & $O(2^n + m)$ \\ \hline
		~\thmref{e1_bound} & $O(n^{3.5})$ \\ \hline
    ~\thmref{e1_bound_opt} & $O((n+k)^{5.5})$ \\   
   \hline
  \end{tabular}
	\caption {Time complexity of different methods to compute the error bound $\delta$.}
	\tablabel{time_complex} 
\end{table}

\section{Safety Verification with Output Abstractions}
\seclabel{verificationusingoutputabstraction}
In this section, we focus on answering two critical questions: (a) how can an output abstraction be used to verify safety specifications of the original, full-order system? (b) How can an appropriate output abstraction be derived automatically?
To answer the first question, safety specifications must be transformed from those over the states of the full-order system to its output abstraction. 

For the general approximate bisimulation relation $\Sigma \sim_\delta \tilde \Sigma$, we usually use $\delta$-neighborhood to transform the safe and unsafe set.
\begin{proposition}
If $\Sigma \sim_\delta \tilde \Sigma$, then the following statements are true:
\begin{enumerate}
\item System $\Sigma$ is safe if $R_{[t_0,t_f]}(\tilde{\Sigma}) \cap \mathcal{N}(U(\Sigma),\delta) = \emptyset$ 
\item System $\Sigma$ is unsafe if $R_{[t_0,t_f]}(\tilde{\Sigma}) \cap \mathcal{N}(S(\Sigma),\delta) \neq \emptyset$ 
\end{enumerate}
where $\mathcal{N(\cdot,\delta)}$ denotes the $\delta$-neighborhood of a set.
\end{proposition}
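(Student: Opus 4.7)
The plan is to derive both parts directly from the $\delta$-approximate bisimulation property $\Sigma\sim_\delta\tilde\Sigma$ stated in \defref{bisim}: every trajectory of one system is matched, at each instant, by a trajectory of the other whose output deviates by at most $\delta$ in Euclidean norm. Together with the triangle inequality, this is essentially the only tool needed; no new facts about the balanced truncation construction, the error bounds of \thmref{e1_bound}--\thmref{e2_bound}, or the internal geometric structure of $S(\Sigma)$ and $U(\Sigma)$ are required.

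For part~(1) I would argue by contrapositive. Suppose $\Sigma\nvDash S(\Sigma)$, so there is an input $u\in\mathcal{U}$, a time $t^\star\in[t_0,t_f]$, and a trajectory $x(\cdot)$ of $\Sigma$ with $y(t^\star)\in U(\Sigma)$. Clause~(2) of \defref{bisim} furnishes a matching trajectory $\tilde x(\cdot)$ of $\tilde\Sigma$ (driven by the same $u$) such that $(x(t),\tilde x(t))\in\mathscr{R}_\delta$ for all $t$, and in particular $\|y(t^\star)-\tilde y(t^\star)\|\le\delta$. Then $\tilde y(t^\star)$ lies within distance $\delta$ of the unsafe point $y(t^\star)\in U(\Sigma)$, so $\tilde y(t^\star)\in\mathcal{N}(U(\Sigma),\delta)\cap R_{[t_0,t_f]}(\tilde\Sigma)$, contradicting the hypothesis that this intersection is empty.

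Part~(2) is the symmetric argument through clause~(3). Pick a witness $\tilde y(t^\star)\in R_{[t_0,t_f]}(\tilde\Sigma)$ in the neighborhood set of the hypothesis; clause~(3) supplies a matching output $y(t^\star)$ of $\Sigma$ with $\|y(t^\star)-\tilde y(t^\star)\|\le\delta$, and the triangle inequality then forces $y(t^\star)\notin S(\Sigma)$, giving $\Sigma\nvDash S(\Sigma)$. The main obstacle here is notational rather than mathematical: for the triangle-inequality step to actually deliver $y(t^\star)\in U(\Sigma)$, the symbol $\mathcal{N}(S(\Sigma),\delta)$ in the hypothesis of~(2) must be interpreted as the dual of the one in~(1) (i.e., the set of outputs whose distance to $S(\Sigma)$ strictly exceeds $\delta$, equivalently the complement of the $\delta$-enlargement of $S(\Sigma)$). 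Once this convention is pinned down, the entire proof collapses to two invocations of bisimulation plus the triangle inequality, exactly paralleling the element-to-element safety relation \eqref{safety_relation} that we want this proposition to justify.
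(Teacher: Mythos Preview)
The paper states this proposition without proof, so there is nothing to compare your argument against directly. Your contrapositive argument for part~(1) is the standard one and is correct.

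Your diagnosis of part~(2) is also on target and worth emphasizing: under the literal reading of $\mathcal{N}(S(\Sigma),\delta)$ as the ordinary $\delta$-enlargement of the safe set, the printed statement is false, since any $\tilde y(t^\star)\in S(\Sigma)$ already lies in that neighborhood without witnessing unsafety of $\Sigma$. The only interpretation under which part~(2) holds is the dual one you identify, namely that $\tilde y(t^\star)$ lies at distance strictly greater than $\delta$ from $S(\Sigma)$ (equivalently, outside the $\delta$-enlargement of $S(\Sigma)$). With that reading, your use of clause~(3) of \defref{bisim} together with the triangle inequality is exactly right, and the resulting statement is consistent with the element-wise transformed specifications the paper derives afterward in \lemref{3} and \lemref{4}, where $U(M_k^\delta)$ is precisely the complement of an enlarged safe region.
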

\begin{remark}
$\delta$-neighborhood is a general approach to transform safety specification. However, in some cases when the safe and unsafe specifications are describes by polytopes or ellipsoids which are usually used in practical systems, for example in SpaceEx~\cite{frehse2011spaceex} and Ellipsoid Toolbox~\cite{kurzhanskiy2006ellipsoidal}, a more precise transformed safety specification can be derived by our element to element approach in~\secref{optainoutputabstraction} since $\delta_i \leq \left\|\delta\right\|$, thus it performs better than the $\delta$-neighborhood approach.
\end{remark}

In the following, we present detailed algorithms to transform the safety specifications described by convex polytopes and ellipsoids. 
 
\subsection{Transforming Safety Specifications}

\subsubsection{$S(M_n)$ as Convex Polytopes}
Assume that the safety specification of the full-order system is of the form:%
\begin{equation} %
\eqlabel{SP_full_order_polytopes}%
 S(M_n) = \{y \in \mathbb{R}^p |~\Gamma y + \Psi \leq 0 \},
\end{equation}
where $\Gamma = [\alpha_{ij}] \in \mathbb{R}^{q\times p}$ and $\Psi = [\beta_i] \in \mathbb{R}^{q}$.
\begin{lemma} \lemlabel{3}
Given $S(M_n)$ described by~\eqref{SP_full_order_polytopes}, then $S(M_{k}^{\delta})$ and $U(M_{k}^{\delta})$ defined as follows guarantee the safety relation~\eqref{safety_relation}.
\begin{equation}\eqlabel{SP_red_order_polytopes}
\begin{split}
&S(M_k^{\delta}) = \{y_r \in \mathbb{R}^p|~\Gamma y_r + \overline{\Psi} \leq 0\}, \\ 
&U(M_k^{\delta}) = \{y_r \in \mathbb{R}^p|~\Gamma y_r + \underline{\Psi} > 0\}, \\ 
&\overline{\Psi} = \Psi + \Delta,~\underline{\Psi} = \Psi - \Delta, \\
&\Delta = [\Delta_i] \in \mathbb{R}^{q},~\Delta_i = \sum_{j=1}^p|\alpha_{ij}|\delta_j. 
\end{split}
\end{equation}
\end{lemma}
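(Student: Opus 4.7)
The plan is to prove the two implications in the safety relation~\eqref{safety_relation} separately by exploiting the element-wise error bound from~\defref{output_abstraction}, namely $\norm{y^{i}(t)-y_r^{i}(t)} \leq \delta_i$ for every $1 \leq i \leq p$, together with the triangle inequality applied row by row to $\Gamma$.

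For the safety direction, I would fix an arbitrary $y(t) \in R_{[0,t_f]}(M_n)$, which by~\defref{output_abstraction} admits a matching reduced-order output $y_r(t) \in R_{[0,t_f]}(M_k^{\delta})$ satisfying $|y^{j}(t) - y_r^{j}(t)| \leq \delta_j$ componentwise. For each row $i$ of $\Gamma$, I would write
\[
\sum_{j=1}^{p} \alpha_{ij} y^{j}(t) + \beta_i \;=\; \sum_{j=1}^{p} \alpha_{ij} y_r^{j}(t) + \beta_i + \sum_{j=1}^{p} \alpha_{ij}\,(y^{j}(t)-y_r^{j}(t)),
\]
bound the last sum above by $\sum_{j} |\alpha_{ij}|\,\delta_j = \Delta_i$, and use the hypothesis $\sum_{j} \alpha_{ij} y_r^{j}(t) + \beta_i + \Delta_i \leq 0$ coming from $y_r(t) \in S(M_k^{\delta})$ to conclude $\sum_{j} \alpha_{ij} y^{j}(t) + \beta_i \leq 0$. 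Ranging over all $i$ and all $t \in [0,t_f]$ gives $M_n \vDash S(M_n)$.

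For the unsafety direction, I would take $y_r(t^{*}) \in R_{[0,t_f]}(M_k^{\delta}) \cap U(M_k^{\delta})$, so some row $i_0$ satisfies $\sum_{j} \alpha_{i_0 j} y_r^{j}(t^{*}) + \beta_{i_0} - \Delta_{i_0} > 0$. The nontrivial point is that I need a matching trajectory in the full-order system; for this I would invoke the construction $X_{0}(M_k) = \{SHx_0 \mid x_0 \in X_{0}(M_n)\}$ from~\secref{reduction}, so that every reduced initial state lifts back to a full-order initial state, and the same control input $u(\cdot) \in \textsl{U}$ yields $y(t^{*}) \in R_{[0,t_f]}(M_n)$ with $|y^{j}(t^{*}) - y_r^{j}(t^{*})| \leq \delta_j$. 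A symmetric triangle-inequality estimate then produces $\sum_{j} \alpha_{i_0 j} y^{j}(t^{*}) + \beta_{i_0} \geq \sum_{j} \alpha_{i_0 j} y_r^{j}(t^{*}) + \beta_{i_0} - \Delta_{i_0} > 0$, violating $S(M_n)$ and hence $M_n \nvDash S(M_n)$.

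The main obstacle, and the only non-mechanical step, is the unsafety direction: \defref{output_abstraction} as stated only guarantees that every full-order trajectory has a close reduced-order companion, not the reverse. I would therefore have to make explicit the surjectivity of the initialization map $x_0 \mapsto SHx_0$ onto $X_{0}(M_k^{\delta})$ established in~\secref{reduction}, so that the $\epsilon$-companion can in fact be traced backwards. Beyond this, the argument is a routine bookkeeping exercise reducing to the choice $\Delta_i = \sum_{j} |\alpha_{ij}|\delta_j$, which is the tightest row-wise bound compatible with the componentwise error $\delta$; this tightness is precisely what makes the polytope transformation sharper than a uniform $\norm{\delta}$-neighborhood.
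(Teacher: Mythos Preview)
Your proposal is correct and follows essentially the same route as the paper: the core step in both is the row-wise sandwich $\Gamma y_r + \underline{\Psi} \leq \Gamma y + \Psi \leq \Gamma y_r + \overline{\Psi}$, obtained from $|y^{j}-y_r^{j}|\leq \delta_j$ via $|\alpha_{ij}(y^{j}-y_r^{j})|\leq |\alpha_{ij}|\delta_j$ and summation over $j$. The paper's proof is in fact much terser than yours---it records only this inequality and declares the safety relation established---so your explicit treatment of the two implications, and in particular your observation that the unsafety direction requires lifting a reduced-order trajectory back to a full-order one via the surjectivity of $x_0\mapsto SHx_0$ onto $X_0(M_k^\delta)$, is a point the paper leaves implicit.
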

The proof is given in~\appref{Appendix_alg}.

\subsubsection{$S(M_n)$ as Ellipsoids}
Assume that the safety specification of the full-order system is described by an ellipsoid with radius $R$ as:%
\begin{equation}\eqlabel{SP_full_order_ellipsoid}%
 S(M_n) = \{y \in \mathbb{R}^p|~ (y-a)^TQ(y-a) \leq R^2\},%
\end{equation}
where $a$ is the center of the ellipsoid and $Q \in \mathbb{R}^{p\times p}$ is a symmetric positive definite matrix. 

Since $Q$ is a symmetric matrix, there exists an orthogonal matrix $E = [l_1,l_2,..,l_p]$ $= [\gamma_{ij}] \in \mathbb{R}^{p\times p}$ such that $E^TQE = \Lambda = diag(\lambda_1,~\lambda_2,...,\lambda_p)$, where $\lambda_i$ ($>0$) is eigenvalue of $Q$ and $l_i$ is the eigenvector of $Q$ corresponding to $\lambda_i$.
The transformed safety and unsafe specifications of the output abstraction can be obtained with the following lemma. 

\begin{lemma} \lemlabel{4}
Given $S(M_n)$ described by~\eqref{SP_full_order_ellipsoid}, then $S(M_{k}^{\delta})$ and $U(M_{k}^{\delta})$ defined as follows guarantee the safety relation~\eqref{safety_relation}:%
\begin{align*}
\begin{split}
 &S(M_k^{\delta}) = \{y_r \in \mathbb{R}^p |~ (y_r-a)^TQ(y_r-a) \leq (R-\Delta_R)^2\},\\
 &U(M_k^{\delta}) = \{y_r \in \mathbb{R}^p |~ (y_r-a)^TQ(y_r-a) > (R+\Delta_R)^2\},\\
 &\Delta_R = \sqrt{\sum_{i=1}^p[\lambda_i(\sum_{j=1}^p|\gamma_{ij}|\delta_j)^2]}.
\end{split}
\end{align*}
 
\end{lemma}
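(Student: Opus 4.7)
The key technical step is to bound the weighted discrepancy $\sqrt{(y-y_r)^T Q (y-y_r)}$ by the scalar $\Delta_R$, using only the elementwise output-error bounds $\norm{y^i(t)-y_r^i(t)} \leq \delta_i$ furnished by \defref{output_abstraction}. Once such a $Q$-norm bound is in hand, the rest of the argument is a triangle-inequality exercise on the ellipsoidal norm $\|v\|_Q \deq \sqrt{v^T Q v}$, which is a genuine norm because $Q \succ 0$ (e.g., $\|v\|_Q = \|Lv\|_2$ for any Cholesky factor $L$ with $L^T L = Q$).

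My plan is first to use the spectral decomposition $E^T Q E = \Lambda$ to rewrite, for $e \deq y - y_r$,
\begin{equation*}
e^T Q e \;=\; (E^T e)^T \Lambda (E^T e) \;=\; \sum_{i=1}^p \lambda_i \xi_i^2,
\end{equation*}
where each coordinate $\xi_i$ of $E^T e$ is a linear combination of the component errors $e_j = y^j - y_r^j$ whose coefficients are entries of $E$. Applying the ordinary triangle inequality coordinate-by-coordinate gives $|\xi_i| \leq \sum_j |\gamma_{ij}| |e_j| \leq \sum_j |\gamma_{ij}| \delta_j$, and substituting back produces exactly $e^T Q e \leq \Delta_R^2$, i.e.\ $\|y - y_r\|_Q \leq \Delta_R$.

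With this in hand, I will close the two implications of the safety relation~\eqref{safety_relation} by the triangle inequality for $\|\cdot\|_Q$, mirroring the structure of \lemref{3}. For the safe direction, take any $y \in R_{[0,t_f]}(M_n)$; by \lemref{output_abstraction} and the construction $X_0(M_k^\delta) = \{SHx_0 : x_0 \in X_0(M_n)\}$ there is a corresponding $y_r \in R_{[0,t_f]}(M_k^\delta)$ satisfying the elementwise bound, so if $R_{[0,t_f]}(M_k^\delta) \subseteq S(M_k^\delta)$ then $\|y-a\|_Q \leq \|y_r - a\|_Q + \|y-y_r\|_Q \leq (R-\Delta_R) + \Delta_R = R$, placing $y$ in $S(M_n)$. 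For the unsafe direction, a witness $y_r \in R_{[0,t_f]}(M_k^\delta) \cap U(M_k^\delta)$ comes from some $x_r(0) = SH x_0$ with $x_0 \in X_0(M_n)$; taking the full-order trajectory from this $x_0$ and applying the reverse triangle inequality yields $\|y-a\|_Q \geq \|y_r - a\|_Q - \|y - y_r\|_Q > (R+\Delta_R) - \Delta_R = R$, so $y \notin S(M_n)$ and $M_n$ is unsafe.

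The main obstacle I anticipate is the bookkeeping in the first step: I need to verify that the stated form $\Delta_R = \sqrt{\sum_i \lambda_i (\sum_j |\gamma_{ij}| \delta_j)^2}$ matches the coefficients that actually appear in $\xi_i = (E^T e)_i$ under the convention $E = [\gamma_{ij}]$, rather than their transpose, and to confirm that the inequality $|\xi_i| \leq \sum_j |\gamma_{ij}| \delta_j$ is in fact tight enough that no correction term arises. Everything after that bound is effectively the polytope argument of \lemref{3} transported into the rotated eigenbasis of $Q$.
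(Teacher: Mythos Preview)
Your proposal is correct and follows essentially the same route as the paper. The paper also diagonalizes via $E$, bounds each transformed coordinate of $y-y_r$ by $\bar{\delta}_i=\sum_j|\gamma_{ij}|\delta_j$, and then obtains the two inequalities in your triangle-inequality step; the only cosmetic difference is that the paper derives those inequalities by expanding $\sum_i\lambda_i\bar y_i^2=\sum_i\lambda_i[(\bar y_i-\bar y_{r_i})+\bar y_{r_i}]^2$ and applying Cauchy--Schwarz explicitly, whereas you invoke the triangle inequality for $\|\cdot\|_Q$ directly (which is of course the same computation). Your flagged bookkeeping concern about $\gamma_{ij}$ versus $\gamma_{ji}$ is real---the paper's text has an $E$ vs.\ $E^T$ inconsistency between the line $E^TQE=\Lambda$ and the definition $\bar y=E(y-a)$---but this is a labeling issue, not a mathematical one, and the stated $\Delta_R$ is the bound that falls out once the convention is fixed.
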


The proof of this result is given in~\appref{Appendix_alg}.

We can see that the transformed safety specification of the output abstraction is also an ellipsoid (with smaller radius $R-\Delta_R$) located inside the original ellipse defining the safety specification of the full-order system.
Meanwhile the corresponding transformed unsafe specification is defined by the region outside the larger ellipse with the radius $R+\Delta_R$. 

\subsubsection{Further discussion}

Lemmas~\ref{lem:3} and~\ref{lem:4} can be applied directly for the system with bounded safety specification (i.e., the safety specification is bounded and thus, the unsafe specification is unbounded).
Conversely, it is easy to see that our method can also be applied for the system with unbounded safety specification (i.e., the unsafe specification is bounded).
Assume that the unsafe specification of the full-order system is bounded by a convex polytope as follows: 
\begin{equation} 
\eqlabel{UP_full_order_polytopes}
 U(M_n) = \{y \in \mathbb{R}^p |~\Gamma y + \Psi \leq 0 \},
\end{equation}
where $\Gamma$ and $\Psi$ are defined as in~\eqref{SP_full_order_polytopes}.
Then, the transformed unsafe specification for the output abstraction is defined by:
\begin{equation}\eqlabel{UP_red_order_polytopes}
U(M_k^{\delta}) = \{y_r \in \mathbb{R}^p|~\Gamma y_r + \underline{\Psi} \leq 0\}, 
\end{equation} 
where $\underline{\Psi}$, $\Delta$ are defined as in~\lemref{3}. 

Similarly, suppose the unsafe specification of the full-order system is bounded by an ellipsoid with radius $R$: 
\begin{equation}\eqlabel{UP_full_order_ellipsoid}
 U(M_n) = \{y \in \mathbb{R}^p|~ (y-a)^TQ(y-a) \leq R^2\},
\end{equation}
where $a$ and $Q$ are defined as in \eqref{SP_full_order_ellipsoid}.
Then, the corresponding transformed unsafe specification for the output abstraction is bounded by the following:

\begin{figure}[t]%
	\centering%
		\centering%
		\includegraphics[width=0.67\columnwidth]{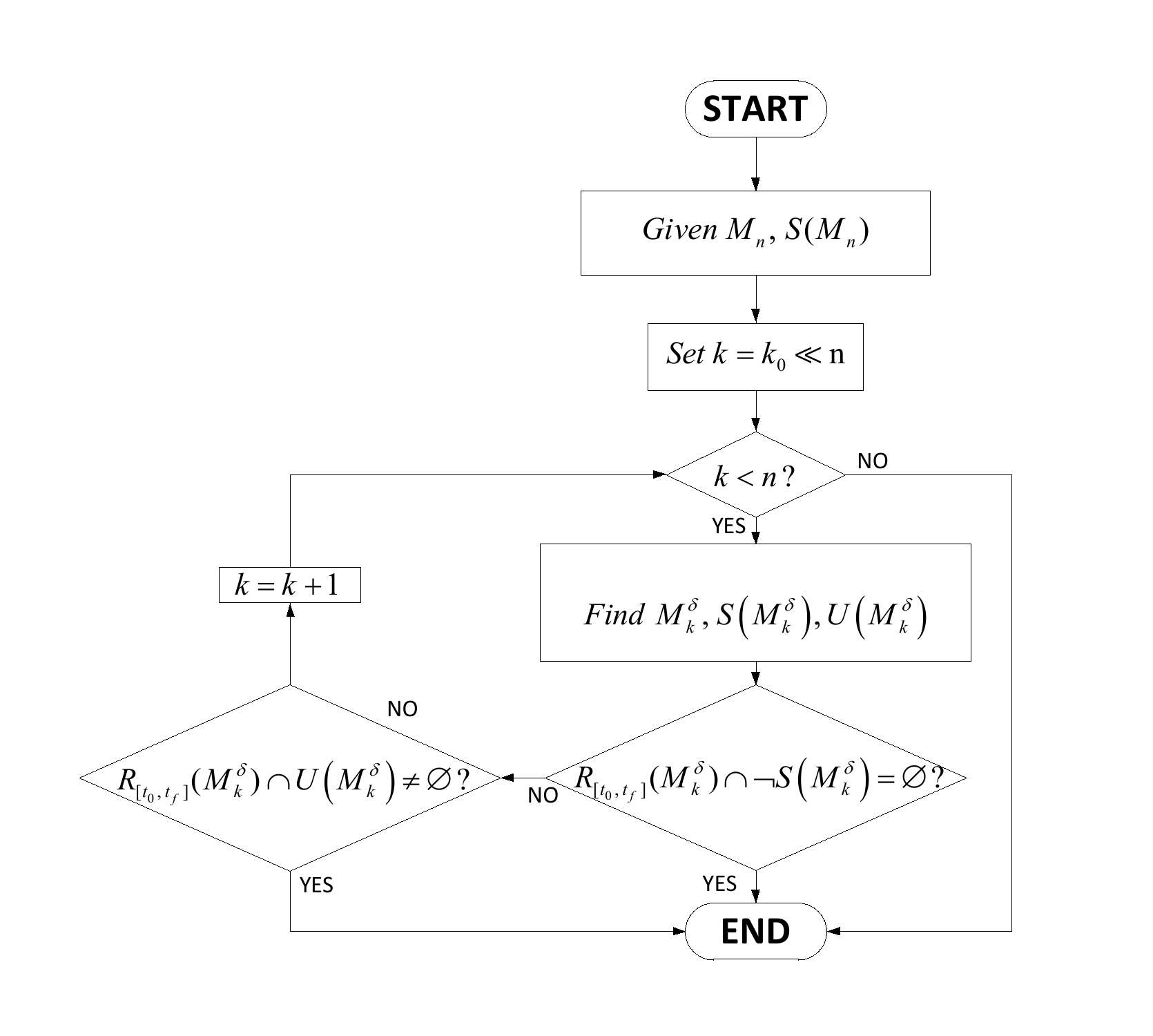}%
		\caption{Semi-algorithm for automatic safety verification using output abstraction.}%
		\figlabel{algorithm}%
		\vspace{-1.5em}%
\end{figure}

\begin{equation}\eqlabel{UP_red_order_ellipsoid}
 U(M_k^{\delta}) = \{y_r \in \mathbb{R}^p |~ (y_r-a)^TQ(y_r-a) \leq (R+\Delta_R)^2\},
\end{equation}
where $\Delta_R$ is the same as in~\lemref{4}.

The proof for the above transformation is the same as in Lemmas~\ref{lem:3} and~\ref{lem:4}.
Geometrically, the transformed unsafe specification of the output abstraction is a larger region containing the unsafe region of the full-order system.  

\subsection{Safety Verification with Output Abstraction}


%
\begin{lemma}\lemlabel{soundness}
Given a asymptotic stable LTI system $M_n$, whenever its output abstraction $M_{k}^{\delta}$ is safe or unsafe, it is sound to claim that the system $M_n$ is safe or unsafe respectively.   
\end{lemma}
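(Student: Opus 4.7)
The plan is to establish both directions of the safety relation \eqref{safety_relation} by combining the output abstraction guarantee (\defref{output_abstraction}) with the element-wise transformation of safety specifications constructed in \lemref{3} and \lemref{4}. Informally, the output abstraction tells us that the output trajectories of $M_n$ and $M_{k}^{\delta}$ are pairwise within $\delta$ (componentwise), while the shrinkage/inflation built into $S(M_{k}^{\delta})$ and $U(M_{k}^{\delta})$ exactly compensates for this deviation.

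First I would argue the safety direction. Assume $R_{[0,t_f]}(M_{k}^{\delta}) \cap \neg S(M_{k}^{\delta}) = \emptyset$. Pick any $x(0) \in X_0(M_n)$ and any admissible $u \in \textsl{U}$, producing an output trajectory $y(t)$. By \lemref{output_abstraction} and the construction of $X_0(M_k)$ via $x_r(0) = SHx(0)$, there exists a corresponding trajectory $y_r(t)$ of $M_{k}^{\delta}$ with $\norm{y^i(t)-y_r^i(t)} \leq \delta_i$ for every $1 \leq i \leq p$ and every $t \in [0,t_f]$. Since $y_r(t) \in R_{[0,t_f]}(M_{k}^{\delta}) \subseteq S(M_{k}^{\delta})$, the explicit polytope/ellipsoid bounds established in \lemref{3} and \lemref{4} ensure that every point within componentwise distance $\delta$ of $y_r(t)$ lies in $S(M_n)$; hence $y(t) \in S(M_n)$ for all $t \in [0,t_f]$, giving $M_n \vDash S(M_n)$.

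Next I would handle the unsafety direction. Suppose $R_{[0,t_f]}(M_{k}^{\delta}) \cap U(M_{k}^{\delta}) \neq \emptyset$, so there exist $t^\star \in [0,t_f]$ and some reachable $y_r(t^\star) \in U(M_{k}^{\delta})$ arising from initial state $x_r(0) \in X_0(M_{k}^{\delta})$ and input $u$. By the construction $X_0(M_k) = \{SHx_0 \mid x_0 \in X_0(M_n)\}$, the seed $x_r(0)$ has a pre-image $x_0 \in X_0(M_n)$, and the bound from \thmref{e1_bound} (or \thmref{e1_bound_opt}) together with \thmref{e2_bound} guarantees that the corresponding full-order trajectory $y(t^\star)$ is within componentwise distance $\delta$ of $y_r(t^\star)$. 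The inflated unsafe specification $U(M_{k}^{\delta})$ in \lemref{3} and \lemref{4} is chosen precisely so that any such $\delta$-neighbor lies in $\neg S(M_n)$, yielding $M_n \nvDash S(M_n)$.

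The main obstacle I anticipate is being rigorous about the trajectory correspondence in both directions. The forward direction uses the ``for every $x(0) \in X_0(M_n)$ there exists $x_r(0) \in X_0(M_{k}^{\delta})$'' clause of \defref{output_abstraction} directly; the reverse direction requires noting that the explicit construction $X_0(M_k) = \{SHx_0 \mid x_0 \in X_0(M_n)\}$ makes this a bijective correspondence (every abstract initial state has a full-order pre-image), so the error bounds derived for the augmented system apply in either direction. Once this correspondence is stated cleanly, the geometric lemmas on polytopes and ellipsoids do all the remaining work.
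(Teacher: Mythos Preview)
Your proposal is correct and follows essentially the same approach as the paper, combining \lemref{output_abstraction}, the componentwise $\delta$-bound from \defref{output_abstraction}, and Lemmas~\ref{lem:3} and~\ref{lem:4} to establish the safety relation~\eqref{safety_relation} in both directions. Your unsafe-direction argument is more explicit than the paper's (which simply says ``a similar proof can be given''); one minor note is that the map $x_0 \mapsto SHx_0$ is surjective onto $X_0(M_k)$ rather than bijective, but surjectivity is precisely what your pre-image argument needs.
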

\begin{proof}
According to~\lemref{output_abstraction}, for a stable LTI system, there exists an output abstraction $M_{k}^{\delta}$. From the definition of the output abstraction, we can see that, for any output trajectory of $M_n$, there exists a corresponding output trajectory of $M_{k}^{\delta}$ such that the distance between two trajectories is always bounded by a sound $\left\|\delta\right\|$. Moreover from Lemmas~\ref{lem:3} and~\ref{lem:4}, because the transformed specifications ($S(M_{k}^{\delta})$ and  $U(M_{k}^{\delta})$) satisfy the safety relation~\eqref{safety_relation}, thus when each output trajectory of $M_{k}^{\delta}$ satisfies the transformed safety specification $S(M_{k}^{\delta})$, its corresponding output trajectory of $M_n$ also satisfies the original safety specification $S(M_n)$. That means, if all output trajectories of $M_{k}^{\delta}$ satisfy the transformed safety specification $S(M_{k}^{\delta})$, then all output trajectories of $M_n$ also satisfy the original safety specification $S(M_n)$. Consequently, when the output abstraction is safe, it is sound to claim that the full-order system is safe. A similar proof can be given for the unsafe case. The proof is completed. 
\end{proof}

So far, the process of obtaining an output abstraction and its safety specifications from a given high-dimensional linear system and safety requirements can be done automatically.
A semi-algorithm for automatic safety verification of a high-dimensional system using its output abstraction is depicted in~\figref{algorithm}.
The method finds a $k$-dimension output abstraction ($k = k_0$ initially, where $k_0$ is given by the user) and the corresponding safety specification, then checks the safety of the output abstraction.
The method may not terminate as checking the safety of the output abstraction is undecidable since it involves computing the reachable states for a linear system, so it is a semi-algorithm.
However, in practice, tools such as SpaceEx may terminate for time-bounded overapproximate reachability computations for systems of small enough dimensionality.\footnote{Note that we cannot use SpaceEx to conclude a system is unsafe because it computes over-approximations of the actual set of reachable states. We could conclude unsafety if under-approximations were available, but not many tools compute under-approximations.}
If it is safe (or unsafe) then the method stops with the conclusion that the full-order system is safe (or unsafe) and returns the current $k$-order abstraction.
If the safety of the output abstraction cannot be verified (i.e., it is indeterminate), then the algorithm will repeat the same process for another output abstraction whom the order is increased by $1$ from the order of the current abstraction.

\begin{figure*}[t]%
    \centering%
    \begin{subfigure}[t]{0.45\textwidth}%
				\centering%
				\includegraphics[width=\textwidth]{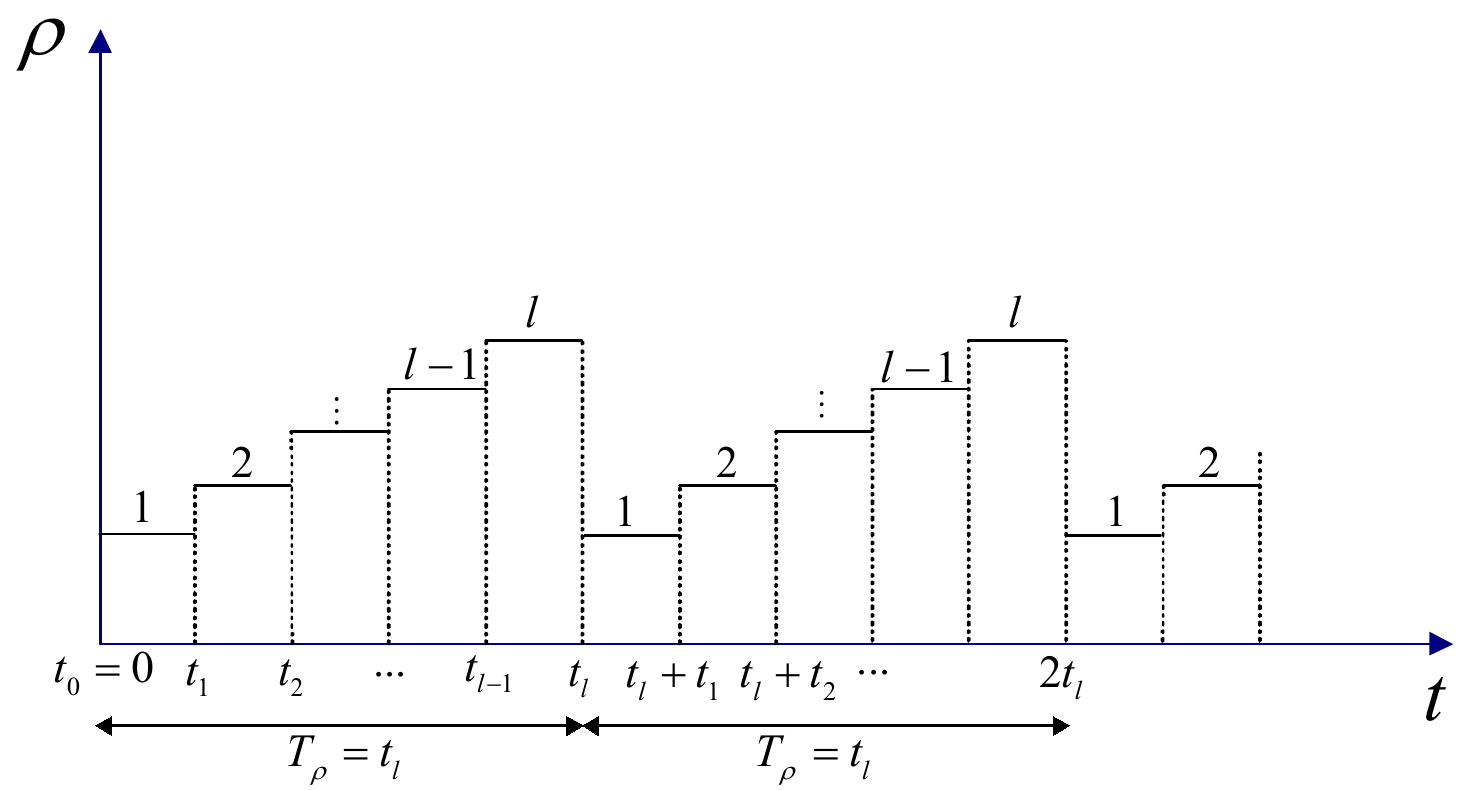}%
				\caption{Periodic switching function for PSS.}
				\figlabel{Switching_function}%
    \end{subfigure}%
~
    \begin{subfigure}[t]{0.45\textwidth}%
				\centering%
				\includegraphics[width=\textwidth]{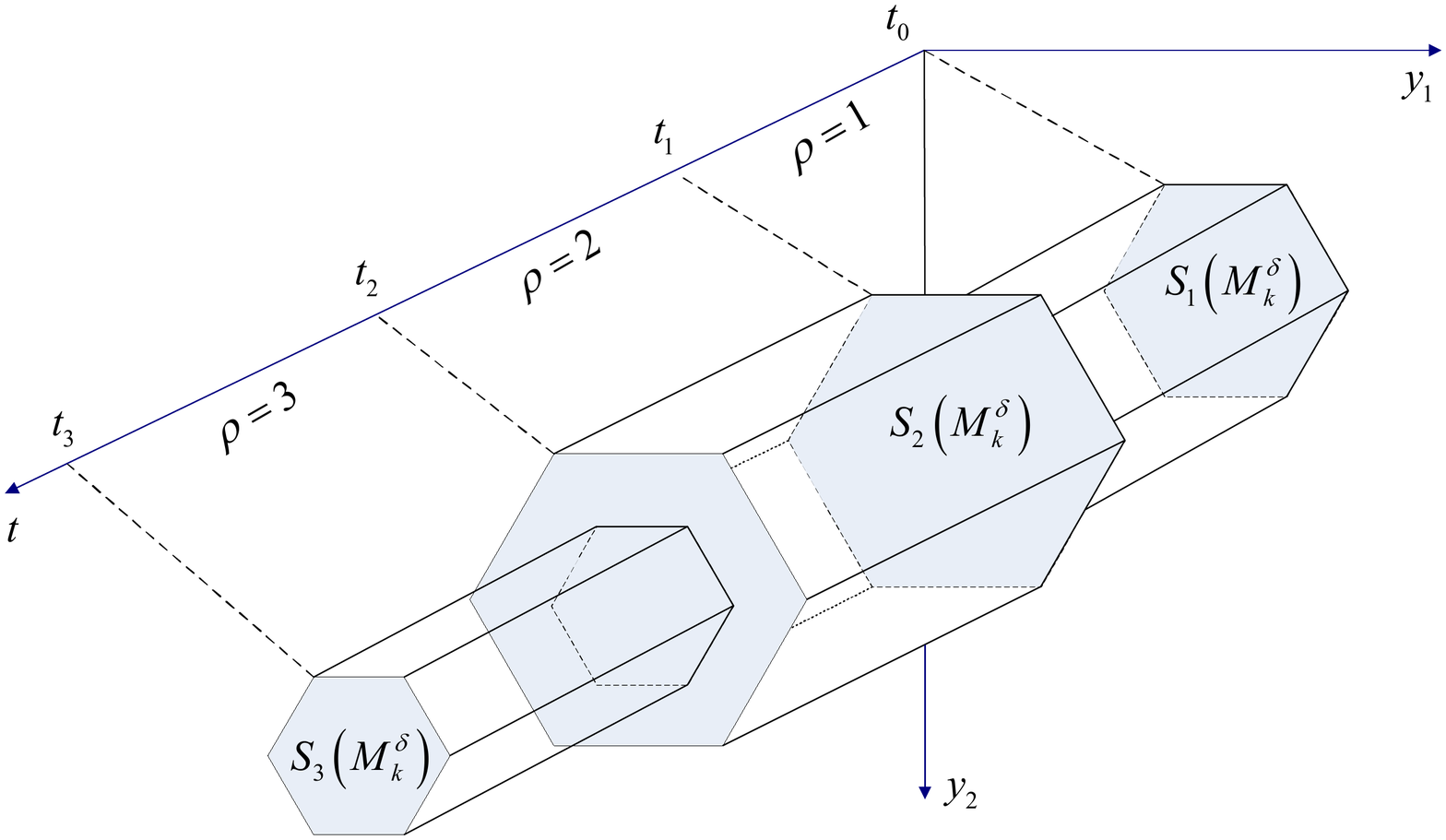}
				\caption{Transformed safety specification of a PSS's output abstraction.}
				\figlabel{PSS_SP_transformation}
    \end{subfigure}%
    \caption{}%
		\figlabel{iss}%
		\vspace{-1.25em}%
\end{figure*}%

We have discussed how to use an output abstraction for verification and proposed an semi-algorithm to obtain automatically an appropriate output abstraction.
In the next section, the method is extended to a class of periodically switched linear systems.

\section{Output Abstraction and Verification for Periodically Switched Systems}
\seclabel{abstraction_hybrid}

In this section, we extend our previous results to verify the safety of periodically switched systems (PSSs). 

\subsection{Output Abstraction for Periodically Switched Systems}

To define the class of PSSs, we need to define the notion of periodical switching signal $\rho(t)$. 

Let $\mathcal{P}$ $=$ $\{1, 2, ..., l-1, l\}$, where $l \in \mathbb{N}$ and $\mathcal{T} = \{t_0, t_1, t_2, ..., t_{l-1}, t_{l}\}$.
The periodical piecewise constant function $\rho(t)$ describing the switching signal is assumed to be right-continuous everywhere and is defined by:%
\[\left\{ \begin{array}{l}
\rho \left( t \right) = i,\;\;{t_{i - 1}} \le t < {t_i},\;\;i \in \mathcal{P}\\
\rho \left( {k \times {t_l} + t} \right) = \rho \left( t \right),\;{t_0} \le t < {t_l},\;\;k = 0,\;1,\;2,\;...\;
\end{array} \right.\]

\figref{Switching_function} illustrates a periodic switching function with the assumption that $t_0 = 0$.
For simplicity, we omit the time argument of switching function in the rest of the paper.
The class of PSS is given by:
\begin{equation}
\eqlabel{PSS}
\begin{split}
    &\dot{x}(t) = A_{\rho}x(t) + B_{\rho}u(t),~~ t \notin \mathcal{T}, \\
    &x(t^+) = r_{\rho}(x(t)),~~ t \in \mathcal{T},~~ x(t^+) \in X^o_{\rho}   \\   
    &y(t) = C_{\rho}x(t),
\end{split}
\end{equation}
where $x(t) \in \mathbb{R}^{n}$ is the system state, $y(t) \in \mathbb{R}^{p}$ is the system output, $u(t) \in \mathbb{R}^{m}$ is the control input, and $(A_{\rho}, B_{\rho}, C_{\rho})$ are system matrices in the ``mode'' $\rho$; $r_{\rho}(\cdot)$, and $X^o_{\rho}$ respectively are the bounded resetting function and the initial set of state in the ``mode'' $\rho$.
In particular, the first equation of~\eqref{PSS} describes the continuous dynamics of the PSS and the second equation represents the resetting law.
For brevity, we denote $M_n$ as the full-order PSS defined by~\eqref{PSS}.

The above class of switched systems with resetting functions can be found in some supervisory control structures with the employment of impulsive control techniques.
Once the system state is observed to reach some unsafe large values far away from initial set, some impulsive control schemes can be activated to abruptly drag the state back to the initial set to ensure the safety of the system.
Note that in this paper, the resetting functions are activated periodically along with time (i.e., they do not depend on state variables).

Since the initial state for each mode is specified in each switching time, using the same approach as for linear continuous system, we can find the reduced-order representation called the local output abstraction for each mode of $M_n$ that produces the output signal capturing the full-order system output signal within an computable error bound.
In other words, we can derive the output abstraction $M_{k}^{\delta}$ for the full-order PSS $M_n$ as:
\begin{equation}\eqlabel{PSS_red}
\begin{split}
    &\dot{x}_r(t) = \breve{A}_{\rho}x_r(t) + \breve{B}_{\rho}u(t),~~ t \notin \mathcal{T}, \\
    &x_r(t^+) = ST_{\rho}x(t^+),~~ t \in \mathcal{T}   \\   
    &y_r(t) = \breve{C}_{\rho}x_r(t),
\end{split}
\end{equation} 
where $x_r(t) \in \mathbb{R}^{k}$, $y_r(t) \in \mathbb{R}^{p}$ and the system matrices $(\breve{A}_{\rho}, \breve{B}_{\rho}, \breve{C}_{\rho})$ of the abstraction in mode $\rho$ can be obtained using the balanced truncation reduction method~\secref{optainoutputabstraction} defined by the balancing transformation matrix $T_{\rho}$ and the cutting matrix $S = \left(
\begin{array}{cc}
    I_{k \times k} & 0_{k \times (n-k)} \\
  \end{array}
\right)$. 


The output signal of the abstraction $M_{k}^{\delta}$ captures the output of the full-order system $M_{n}$ at all the times $t$ within a piecewise linear constant error bound $\delta_{\rho} = [\delta^{\rho}_1, \delta^{\rho}_2, ..., \delta^{\rho}_p]^T$ which can be computed in the same manner as in the case of linear continuous systems in~\secref{optainoutputabstraction}. 

This summarizes the procedure for deriving the output abstraction for the full-order PSS.
Next, we consider how to use the output abstraction to verify safety of the full-order PSS.

\subsection{Verification for Periodically Switched Systems}

Similar to the case of linear continuous system, the key step for the verification process is that from the computed error bound $\delta_{\rho}$ and the given safety specification of the full-order PSS system, we determine the safety specifications for  the corresponding output abstraction that guarantees the safety relation~\eqref{safety_relation}.

Let us consider the first case that the safety specification of the full-order PSS~\eqref{PSS} is described by~\eqref{SP_full_order_polytopes} (i.e. as a polytope).
The transformed safety specifications for the corresponding output abstraction~\eqref{PSS_red} that satisfies the safety relation~\eqref{safety_relation} is defined by:
\vspace{-0.75em}%
\begin{align*}
\begin{split}%
&S(M_k^{\delta}) = \bigcup_{\rho} S_{\rho}(M_k^{\delta}),~~U(M_k^{\delta}) = \bigcup_{\rho} U_{\rho}(M_k^{\delta}), \\
&S_{\rho}(M_k^{\delta}) = \{y_r \in \mathbb{R}^p|~\Gamma y_r + \overline{\Psi}^{\rho} \leq 0\}, \\ 
&U_{\rho}(M_k^{\delta}) = \{y_r \in \mathbb{R}^p|~\Gamma y_r + \underline{\Psi}^{\rho} > 0\}, \\ 
&\overline{\Psi}^{\rho} = \Psi + \Delta_{\rho},~\underline{\Psi}^{\rho} = \Psi - \Delta_{\rho}, \\
&\Delta_{\rho} = [\Delta_i^{\rho}] \in \mathbb{R}^{q},~\Delta_i^{\rho} = \sum_{j=1}^p|\alpha_{ij}|\delta_j^{\rho}. 
\end{split}%
\end{align*}
\vspace{-1em}

Similarly, when the safety specification of the full-order PSS has the form of an ellipsoid as defined in~\eqref{SP_full_order_ellipsoid}, we can derive the corresponding transformed safety specification for the output abstraction as follows.

\begin{align*}
\begin{split}
 &S(M_k^{\delta}) = \bigcup_{\rho} S_{\rho}(M_k^{\delta}),~~U(M_k^{\delta}) = \bigcup_{\rho} U_{\rho}(M_k^{\delta}), \\
 &S_{\rho}(M_k^{\delta}) = \{y_r \in \mathbb{R}^p |~ (y_r-a)^TQ(y_r-a) \leq (R-\Delta_R^{\rho})^2\},\\
 &U_{\rho}(M_k^{\delta}) = \{y_r \in \mathbb{R}^p |~ (y_r-a)^TQ(y_r-a) > (R+\Delta_R^{\rho})^2\},\\
 &\Delta_R^{\rho} = \sqrt{\sum_{i=1}^p[\lambda_i(\sum_{j=1}^p|\gamma_{ij}|\delta_j^{\rho})^2]}.
\end{split}
\end{align*}

We have obtained the transformed safety specification for the output abstraction of a PSS.
Since the error bound $\delta_{\rho}$ varies along with time (i.e. it depends on what mode is being activated), the transformed safety specification for the output abstraction also varies along with time.
In addition, it is periodic because of the periodicity of the switching function $\rho(t)$.
\figref{PSS_SP_transformation} presents an example of the transformed safety specification of a PSS with two outputs.
From the transformed safety specification, we can verify straightforwardly the safeness of the output abstraction to conclude about safety of its full-order PSS.

We have considered how to obtain an output abstraction and use it to verify safety of PSS.
In the next section, several case of studies are presented to evaluate the benefits of our method.

\vspace{-2em}
\section{Case Studies and Evaluation}
\seclabel{casestudies}

\begin{figure*}[t]%
    \centering%
		\begin{subfigure}[t]{0.48\textwidth}%
				\centering%
				\includegraphics[width=\textwidth]{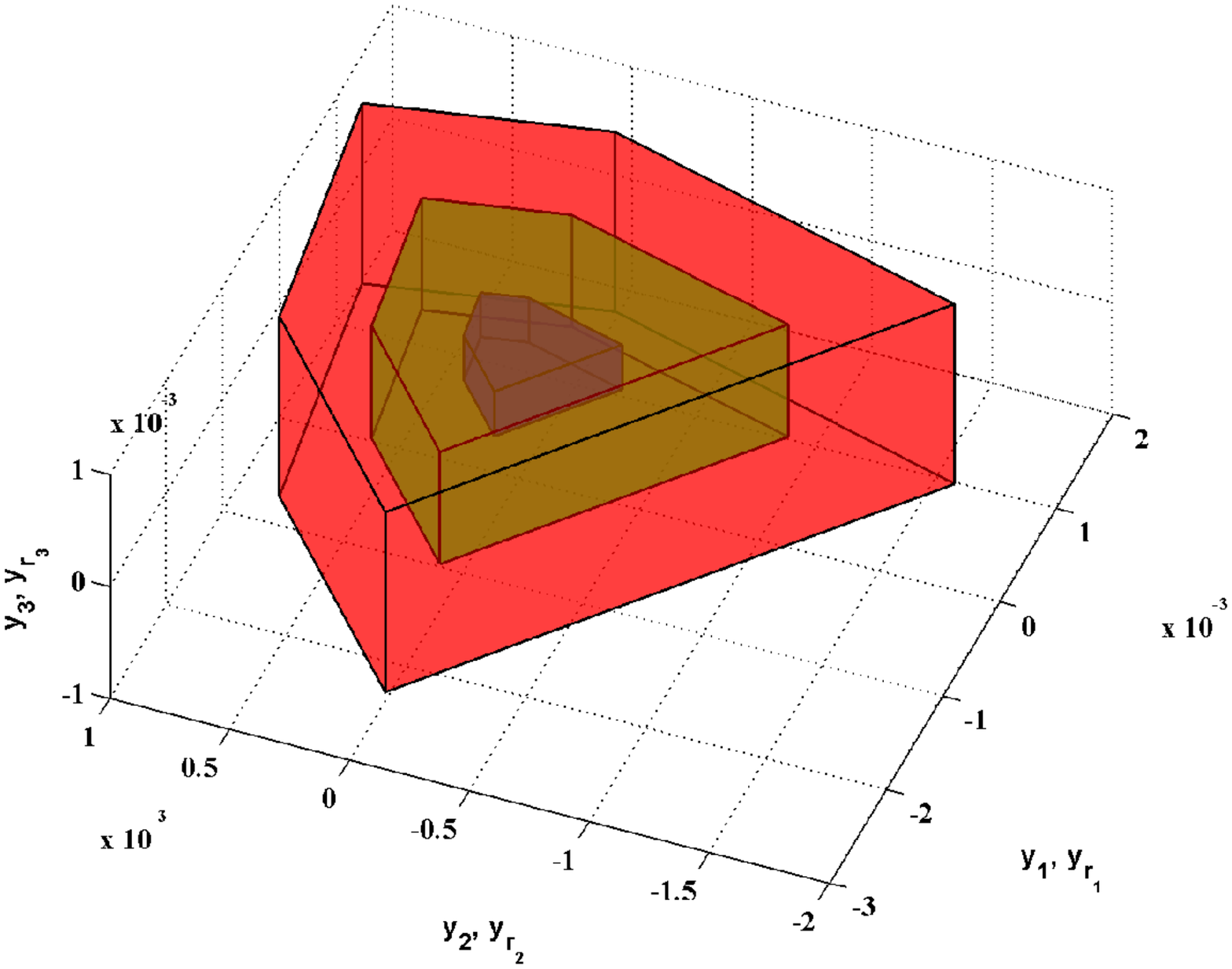}%
				\caption{Safety specification $S_{270} =$ of the full order ISS system which is the region inside the middle green polytopes and the transformed safety specifications of its 10-order output abstraction in which the safe region $S_{10}^{\delta}$ is inside the smallest blue polytopes and the unsafe region $U_{10}^{\delta}$ is outside the largest red polytopes.}%
				\figlabel{iss_SP}%
		\end{subfigure}%
~
		\begin{subfigure}[t]{0.4\textwidth}%
				\centering%
				\includegraphics[width=\textwidth]{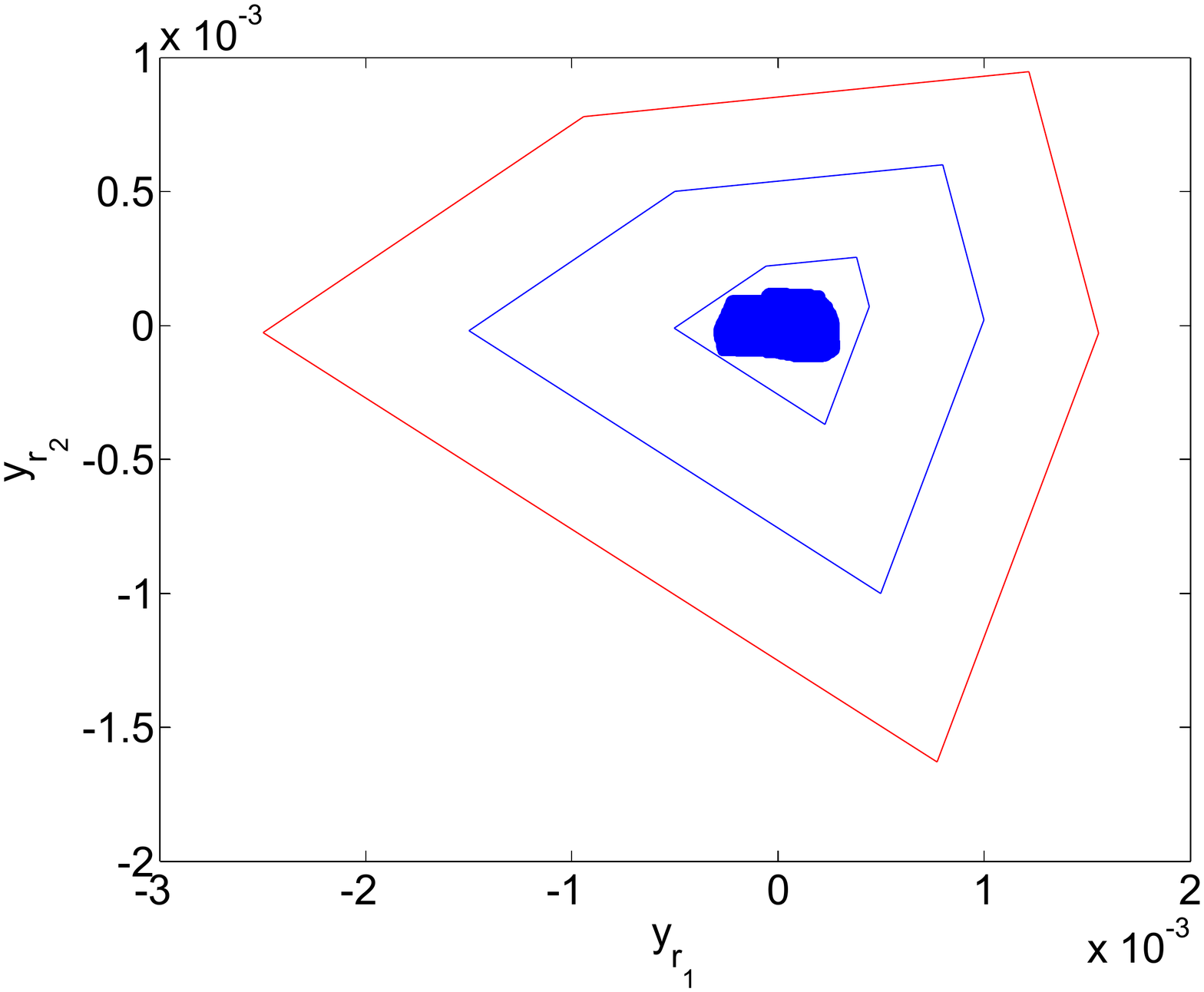}%
				\caption{Reachable set of $(y_{r_1}, y_{r_2})$ of the ISS's 10-order abstraction $M^{\delta}_{10}$ in the period of time $[0,~20s]$ and its safe region $S_{12}^{\delta}$ (inside the smallest blue polygon) and unsafe region $U_{12}^{\delta}$ (outside the red polygon).}%
				\label{verify_S_12}
		\end{subfigure}%

    \caption{}%
		\vspace{-1.25em}%
\end{figure*}%

\begin{figure*}[t]%
    \centering%
    \begin{subfigure}[t]{0.4\textwidth}%
				\centering%
				\includegraphics[width=\textwidth]{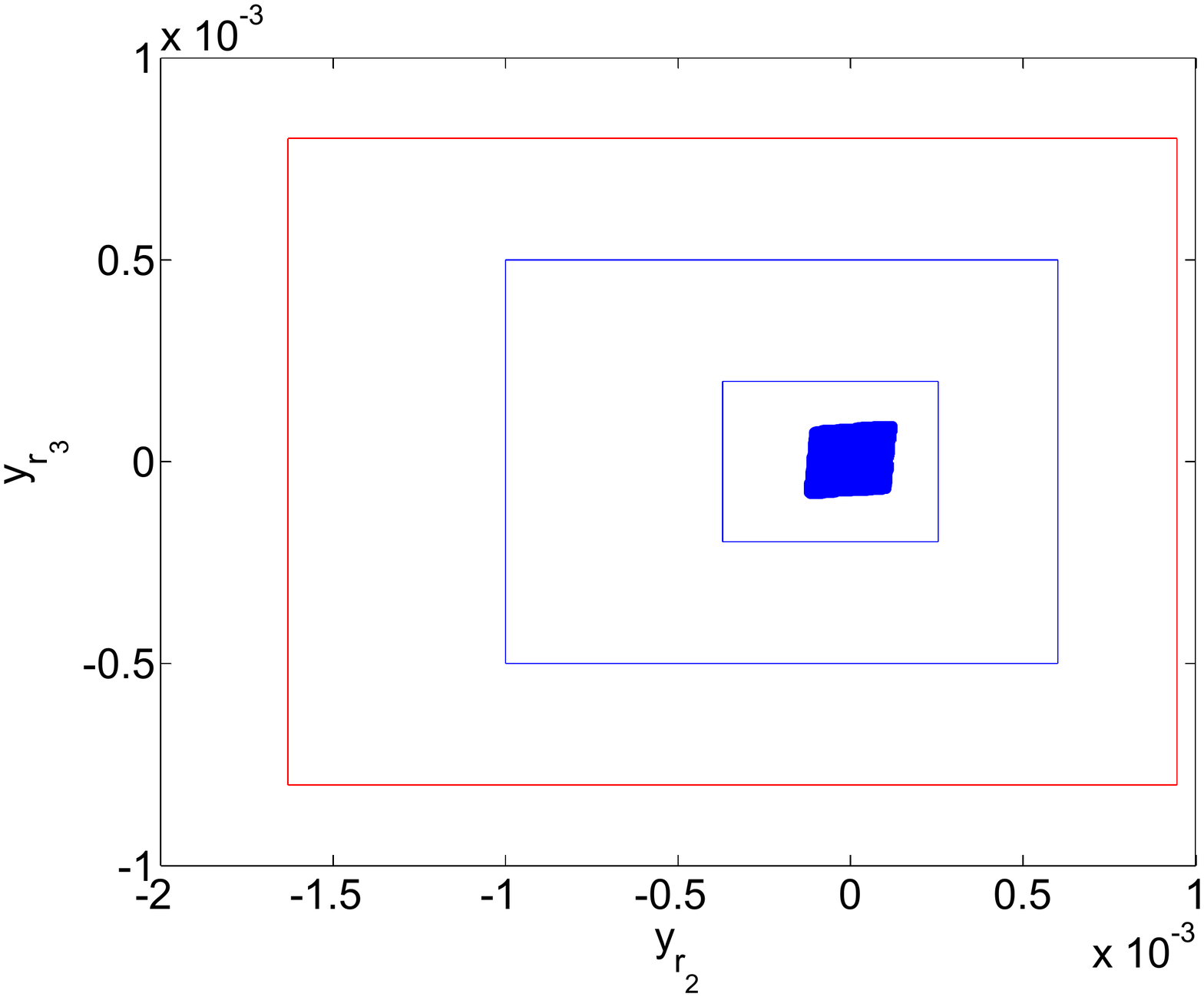}
				\caption{Reachable set of $(y_{r_2}, y_{r_3})$ of the ISS's 10-order abstraction $M^{\delta}_{10}$ in the period of time $[0,~20s]$ and its safe region  $S_{23}^{\delta}$ (inside the smallest blue polygon) and unsafe region $U_{23}^{\delta}$ (outside the red polygon).}
				\label{verify_S_23}
    \end{subfigure}%
~
    \begin{subfigure}[t]{0.4\textwidth}%
				\centering%
				\includegraphics[width=\textwidth]{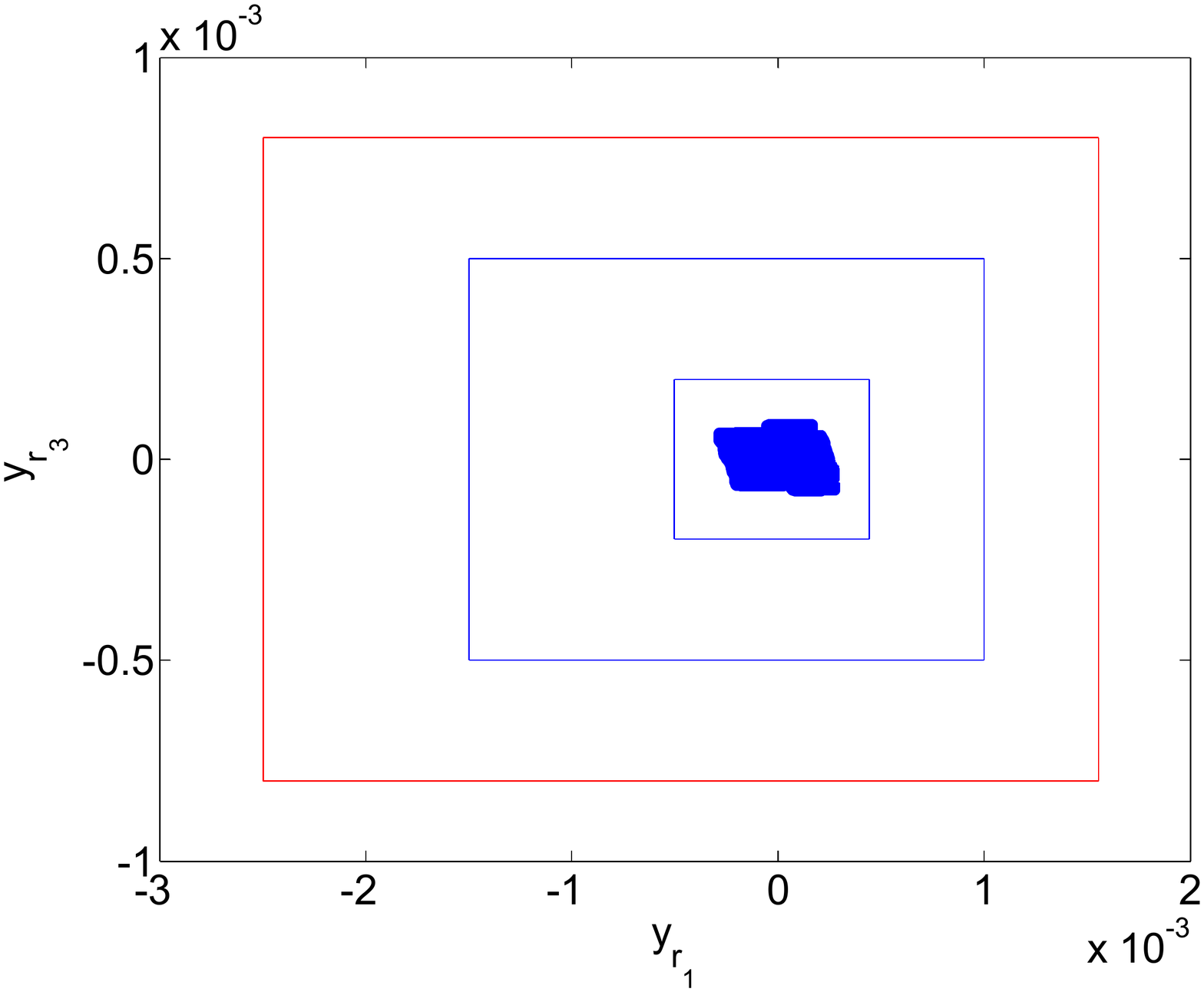}
				\caption{Reachable set of $(y_{r_1}, y_{r_3})$ of the ISS's 10-order abstraction $M^{\delta}_{10}$ in the period of time $[0,~20s]$ and its safe region $S_{13}^{\delta}$ (inside the smallest blue polygon) and unsafe region $U_{13}^{\delta}$ (outside the red polygon).}
				\label{verify_S_13}
    \end{subfigure}%

    \caption{}%
		\vspace{-1.25em}%
\end{figure*}%

\begin{table}[b] 
\resizebox{\textwidth}{!}{
  \begin{tabular}{|l|l|l|l|l|l|l|}
    \hline
    No. & Benchmark & Type & n & m & p \\
    \hline
    1 & Motor control system (MCS) & LTI & 8 & 2 &2 \\
		2 & Helicopter~\cite{frehse2011spaceex} & LTI & 28 & 6 & 2 \\
		3 & Building model (BM)~\cite{chahlaoui2002collection}& LTI & 48 & 1 & 1 \\
		4 & International space station (ISS)~\cite{chahlaoui2002collection} & LTI& 270 & 3 & 3 \\
		5 & Partial differential equation (Pde)~\cite{chahlaoui2002collection} & LTI & 84 & 1 &1 \\ 
		6 & FOM~\cite{chahlaoui2002collection} & LTI & 1006 & 1&1 \\ 
		7 & Synchronous position control system with 2 motors (SMS2) & PSS & 8 & 2&2 \\
    \hline
  \end{tabular}
	}
	\caption{Benchmarks for the order-reduction abstraction method in which $n$ is dimension of the system; $m$ and $p$ are the number of inputs and outputs respectively.}
	\tablabel{benchmark}
\end{table}

\begin{table*}[b]
\resizebox{\textwidth}{!}{
  \begin{tabular}{|p{2cm}|p{6.5cm}|p{2.5cm}|p{3.5cm}|}
    \hline
    \multirow{2}{*}{\textbf{Benchmark}} & \textbf{Initial set of states}  & \textbf{Input constraint} & \textbf{Safety specification} \\
		& $X_0 = \{x_0 \in \mathbb{R}^{n}|~lb(i) \leq x_0(i) \leq ub(i),~1 \leq i \leq n \}$ & $u = [u_1,\cdots,u_m]^T$ & $y = [y_1,\cdots,y_p]^T$ \\
    \hline

\multirow{3}{*}{\parbox{2cm}{Motor control system}}  &$lb(i) = ub(i) = 0,~ i=2,3,4,6,7,8,$ & $u_1 \in [0.16,~0.3],$ & unsafe region: \\                           
& $lb(2)=0.002,~ub(2)=0.0025,$ & $u_2 \in [0.2,~0.4].$ & $0.35 \leq y_1 \leq 0.4$, \\   
& $lb(3)=0.001,~ub(3) = 0.0015.$ &  & $0.45 \leq y_2 \leq 0.6$.\\  
		\hline

 \multirow{3}{*}{Helicopter} & $lb(i) = ub(i) = 0.1,~ i=1,4,5,6,7,$ & $u_i \in [-1,~1],$ & unsafe region: \\                          
	& $lb(2)=lb(3)=0.098,ub(2)=0.11,ub(3)=0.102,$ & $1 \leq i \leq 6.$ & $-1 \leq y_1 \leq 1$, \\
	& $lb(i) = ub(i) = 0,~8\leq i \leq 28.$ &  & $10 \leq y_2 $\\         
		\hline
		
 \multirow{3}{*}{\parbox{2cm}{Building model}} & $lb(i) = 0.0002,~ub(i)=0.00025,~1 \leq i \leq 10,$ &\multirow{3}{*}{$u_1 \in [0.8,~1].$} & unsafe region: \\                          
	& $lb(25)= -0.0001,~ub(25)=0.0001,$ & & $0.008 \leq y_1$ \\
	& $lb(i) = ub(i) = 0,~11 \leq i \leq 48,~i\neq 25.$ & &\\         
		\hline

 \multirow{3}{*}{\parbox{2cm}{Partial differential equation}} & $lb(i) = 0,~ub(i)=0,~1 \leq i \leq 64$ &\multirow{3}{*}{$u_1 \in [0.5,~1].$} & safe region: \\                          
	& $lb(i) = 0.001,~ub(i)=0.0015,~64 \leq i \leq 80,$ &  & $y_1 \leq 12$\\
	&  $lb(i) = -0.002,~ub(i)=-0.0015,~81 \leq i \leq 84.$ &  & \\         
		\hline
		
 \multirow{7}{*}{\parbox{2cm}{International space station}} &  \multirow{7}{*} {$lb(i) = -0.0001,~ub(i)=0.0001,~1 \leq i \leq 270.$}  & & Safe region:\\                          
	& & & $-461y_1 + 887y_2 + 0.67 \leq 0$, \\
	& & $u_1 \in [0,~0.1],$ & $-440y_1 - 898y_2 - 0.68 \leq 0$, \\ 
	& & $u_2 \in [0.8,~1],$ & $-76.7y_1 +997y_2 - 0.54 \leq 0$, \\  
	&	&	$u_3 \in [0.9,~1].$	& $898y_1 -440y_2 - 0.89 \leq 0$, \\
	&	&		& $945y_1 + 326y_2 - 0.95 \leq 0$, \\
	&	&		& $-0.0005 \leq y_3 \leq 0.0005$. \\
		\hline
		
 \multirow{3}{*}{\parbox{2cm}{FOM}} & $lb(i) = -0.0001,~ub(i)= 0.0001,~1 \leq i \leq 400$ &\multirow{3}{*}{$u_1 \in [-1,~1].$} & safe region: \\  & $lb(i) = 0.0002,~ub(i)=0.00025,~401 \leq i \leq 800,$ &  & $y_1 \leq 45$ \\
	&  $lb(i) = 0,~ub(i)= 0,~801 \leq i \leq 1006.$ &  & \\ 
	\hline    
  \end{tabular}
}	
  \caption{Initial states, input constraints and safety specification of LTI benchmarks.}
	 \tablabel{constraint}
\end{table*}

\begin{table*}[t] 
\resizebox{\textwidth}{!}{
\begin{tabular}{|p{1.5cm}|c|c|c|c|c|c|c|c|c|c|c|c|c|c|} 
\hline 
\multirow{2}{*}{\textbf{Benchmark}} &\multirow{2}{*}{\textbf{k}} & \multicolumn{2}{c|}{\textbf{\cite{girard2007approximate}}} & \multicolumn{3}{c|}{\textbf{\cite{han2004reachability}}} & \multicolumn{4}{c|}{\textbf{Mixed bound}} & \multicolumn{4}{c|}{\textbf{Theoretical bound}} \\ 
\cline{3-15} 
&  & {$\delta$} & $t(s)$ & $\delta$ & $t(s)$ & $N$ & $ e_1 $ & $ e_2 $ & $\delta$ &$t(s)$ & $e_1 $ & $e_2$ & $\delta$ &$t(s)$ \\ 
\hline 
\multirow{2}{*}{\parbox{1.5cm}{Motor control system}} 
& $5$ 
& $2.1$ & $0.93$ 
&$\begin{pmatrix}0.0021  \\ 0.047\end{pmatrix}$ & $0.17 $ & {$2^2+2$} 
&$\begin{pmatrix}0.00049  \\ 0.00062 \end{pmatrix}$ &$\begin{pmatrix}0.002  \\ 0.047 \end{pmatrix}$ &$\begin{pmatrix}0.0025  \\ 0.047 \end{pmatrix}$ & $0.92$ 
&$\begin{pmatrix}0.0098  \\ 0.0093 \end{pmatrix}$ &$\begin{pmatrix}0.53  \\ 0.53 \end{pmatrix}$ &$\begin{pmatrix}0.54  \\ 0.54 \end{pmatrix}$ & $0.15$  \\ 
\cline{2-15} 
& $4$ 
& $1.5$ & $0.64$ 
&$\begin{pmatrix}0.036  \\ 0.047\end{pmatrix}$ & $0.19 $ & {$2^2+2$} 
&$\begin{pmatrix}0.00086  \\ 0.00062 \end{pmatrix}$ &$\begin{pmatrix}0.035  \\ 0.047 \end{pmatrix}$ &$\begin{pmatrix}0.036  \\ 0.047 \end{pmatrix}$ & $0.9$ 
&$\begin{pmatrix}0.009  \\ 0.009 \end{pmatrix}$ &$\begin{pmatrix}0.91  \\ 0.91 \end{pmatrix}$ &$\begin{pmatrix}0.92  \\ 0.92 \end{pmatrix}$ & $0.15$  \\ 
\hline 

\multirow{3}{*}{\parbox{1.5cm}{Helicopter}} 
& $20$ 
& $0.84$ & $17$ 
&$\begin{pmatrix}7.1e-05  \\ 4.3e-05\end{pmatrix}$ & $0.6 $ & {$2^4+6$} 
&$\begin{pmatrix}0.0072  \\ 0.018 \end{pmatrix}$ &$\begin{pmatrix}5.7e-05  \\ 3.2e-05 \end{pmatrix}$ &$\begin{pmatrix}0.0073  \\ 0.018 \end{pmatrix}$ & $35$ 
&$\begin{pmatrix}0.28  \\ 0.95 \end{pmatrix}$ &$\begin{pmatrix}0.0017  \\ 0.0017 \end{pmatrix}$ &$\begin{pmatrix}0.28  \\ 0.95 \end{pmatrix}$ & $0.49$  \\ 
\cline{2-15} 
& $16$ 
& $28$ & $12$ 
&$\begin{pmatrix}0.00075  \\ 0.0013\end{pmatrix}$ & $0.56 $ & {$2^4+6$} 
&$\begin{pmatrix}0.0072  \\ 0.018 \end{pmatrix}$ &$\begin{pmatrix}0.0007  \\ 0.00087 \end{pmatrix}$ &$\begin{pmatrix}0.0079  \\ 0.019 \end{pmatrix}$ & $23$ 
&$\begin{pmatrix}0.28  \\ 0.95 \end{pmatrix}$ &$\begin{pmatrix}0.029  \\ 0.029 \end{pmatrix}$ &$\begin{pmatrix}0.3  \\ 0.97 \end{pmatrix}$ & $0.45$  \\ 
\cline{2-15} 
& $10$ 
& $160$ & $8.1$ 
&$\begin{pmatrix}0.024  \\ 0.038\end{pmatrix}$ & $0.55 $ & {$2^4+6$}
&$\begin{pmatrix}0.0085  \\ 0.021 \end{pmatrix}$ &$\begin{pmatrix}0.021  \\ 0.031 \end{pmatrix}$ &$\begin{pmatrix}0.03  \\ 0.053 \end{pmatrix}$ & $13$ 
&$\begin{pmatrix}0.27  \\ 0.93 \end{pmatrix}$ &$\begin{pmatrix}1  \\ 1 \end{pmatrix}$ &$\begin{pmatrix}1.3  \\ 1.9 \end{pmatrix}$ & $0.45$  \\ 
\hline 

\multirow{3}{*}{\parbox{1.5cm}{Building model}} 
& $25$ 
& $0.0096$ & $180$ 
&$0.0051$& $22$ &{$2^{11}+1$} 
&$0.013$ &$6.2e-05$ &$0.013$ &$130$ 
&$0.083$ &$0.0072$ &$0.09$ &$1$ \\ 
\cline{2-15} 
& $15$ 
& $0.069$ & $120$ 
&$0.005$& $18$ &{$2^{11}+1$} 
&$0.012$ &$0.00044$ &$0.013$ &$58$ 
&$0.078$ &$0.084$ &$0.16$ &$0.97$ \\ 
\cline{2-15} 
& $6$ 
& $0.1$ & $44$ 
&$0.0058$& $14$ &{$2^{11}+1$}  
&$0.011$ &$0.00025$ &$0.012$ &$24$ 
&$0.073$ &$0.21$ &$0.28$ &$0.98$ \\ 
\hline 

\multirow{4}{*}{\parbox{1.5cm}{Partial differential equation}} 
& $30$ 
& $0.75$ & $230$ 
&N/A& OOT &{$2^{20}+1$} 
&$0.033$ &$5.6e-14$ &$0.033$ &$1500$ 
&$1$ &$5e-12$ &$1$ &$1.7$ \\ 
\cline{2-15} 
& $20$ 
& $0.038$ & $160$ 
&N/A& OOT &{$2^{20}+1$} 
&$0.033$ &$3.5e-14$ &$0.033$ &$890$ 
&$1$ &$5.4e-12$ &$1$ &$1.7$ \\ 
\cline{2-15} 
& $10$ 
& $0.086$ & $55$ 
&N/A& OOT &{$2^{20}+1$} 
&$0.033$ &$9.8e-13$ &$0.033$ &$520$ 
&$0.92$ &$2.7e-11$ &$0.92$ &$1.7$ \\ 
\cline{2-15} 
& $6$ 
& $0.1$ & $42$ 
&N/A& OOT &{$2^{20}+1$}  
&$0.033$ &$3.5e-07$ &$0.033$ &$370$ 
&$0.89$ &$5.5e-06$ &$0.89$ &$1.7$ \\ 
\hline 

\multirow{2}{*}{\parbox{1.5cm}{International space station}} 
& $25$ 
& N/A & OOT 
&N/A & OOT & {$2^{270}+3$} 
&N/A &$\begin{pmatrix}2.1e-05  \\ 0.001  \\ 4.6e-05 \end{pmatrix}$ & N/A & OOT 
&$\begin{pmatrix}0.00043  \\ 0.00026  \\ 0.00026 \end{pmatrix}$ &$\begin{pmatrix}0.47  \\ 0.47  \\ 0.47 \end{pmatrix}$ &$\begin{pmatrix}0.47  \\ 0.47  \\ 0.47 \end{pmatrix}$ & $11$  \\ 
\cline{2-15} 
& $10$ 
& N/A & OOT 
&N/A & OOT & {$2^{270}+3$} 
&N/A &$\begin{pmatrix}2.4e-05  \\ 5.6e-05  \\ 9e-05\end{pmatrix}$ & N/A & OOT  
&$\begin{pmatrix}0.00042  \\ 0.00022  \\ 0.00021 \end{pmatrix}$ &$\begin{pmatrix}1.7  \\ 1.7  \\ 1.7 \end{pmatrix}$ &$\begin{pmatrix}1.7  \\ 1.7  \\ 1.7 \end{pmatrix}$ & $12$  \\ 
\hline 

\multirow{3}{*}{\parbox{1.5cm}{FOM model}} 
& $20$ 
& N/A & OOT  
&N/A& OOT &{$2^{800}+1$} 
&N/A &$2.7e-07$ & N/A & OOT 
&$1.3$ &$1.1e-05$ &$1.3$ &$48$ \\ 
\cline{2-15} 
& $15$ 
& N/A & OOT 
&N/A& OOT &{$2^{800}+1$}  
&N/A &$0.00021$ & N/A & OOT 
&$1.3$ &$0.0065$ &$1.3$ &$48$ \\ 
\cline{2-15} 
& $10$ 
& N/A & OOT 
&N/A& OOT &{$2^{800}+1$} 
&N/A &$0.1$ &N/A & OOT 
&$1.3$ &$2.2$ &$3.5$ &$48$ \\ 
\hline 
\end{tabular}
} 
 \caption{The error bounds and computation times obtained from different methods on different benchmarks in which: $k$ is the dimension of the output abstraction, $\delta$ is total error bound, $e_1$ is the zero input response error, $e_2$ is the zero state response error, $t$ is the error computing time (in second) and $N$ is the number of simulations. The terms of ``N/A'' and ``OOT'' mean ``not applicable'' and ``out of time'', respectively.} 
\tablabel{experiment}
\end{table*} 

\begin{table*}[t] 

\resizebox{\textwidth}{!}{
\begin{tabular}{|p{2cm}|c|c|c|c|c|c|c|} 
\hline 
\multirow{2}{*}{\textbf{Benchmark}} & \multicolumn{2}{c|}{\textbf{Full Order System}} & \multicolumn{5}{c|}{\textbf{Output Abstraction}} \\ 
\cline{2-8} 
& \textbf{Time(s)} & \textbf{Memory(Kb)} & $k$ & $T_1(s)$ & $T_2(s)$ & \textbf{Total time(s)} & \textbf{Memory(Kb)} \\ 
\hline 
\multirow{2}{*}{\parbox{2cm}{Motor control system}} & \multirow{2}{*}{$27$} & \multirow{2}{*}{$3048$}
& $5$ &  $26$ & $0.92$ & $26.9$ &$3044$ \\
\cline{4-8}
&  &  & $4$ & $16.7$ & $0.9$ & $17.6$ & $3044$ \\
\hline

\multirow{3}{*}{\parbox{2cm}{Helicopter}} & \multirow{3}{*}{$287$} & \multirow{3}{*}{$3052$}
& $20$ & $206$ & $35$ & $241$ & $3052$  \\
\cline{4-8}
& & & $16$ & $128$ & $23$ & $151$ & $3048$ \\
\cline{4-8}
& & & $10$ & $68$ & $13$ & $81$ & $3048$ \\
\hline

\multirow{3}{*}{\parbox{2cm}{Building model}} & \multirow{3}{*}{$893$} & \multirow{3}{*}{$3056$}
& $25$ & $237.2$ & $130$ & $367.2$ & $3048$  \\
\cline{4-8}
& & & $15$ & $82.3$ & $58$ & $140.3$ & $3044$ \\
\cline{4-8}
& & & $6$ & $19.5$ & $24$ & $43.5$ & $3040$ \\
\hline

\multirow{3}{*}{\parbox{2cm}{Partial differential equation}} & \multirow{4}{*}{OOT} & \multirow{4}{*}{N/A}
& $30$ & $725.6$ & $1500$ & $2225.6$ & $3048$ \\
\cline{4-8}
& & & $20$ & $310$ & $890$ & $1200$ & $3048$ \\
\cline{4-8}
& & & $10$ & $75.2$ & $520$ & $595.2$ & $3040$  \\
\cline{4-8}
& & & $6$ & $31.9$ & $370$ & $401.9$ & $3040$  \\
\hline

\multirow{2}{*}{\parbox{2cm}{International space station}} & \multirow{2}{*}{OOT} & \multirow{2}{*}{N/A}
 & $25$ & $254.3$ & $11$ & $265.3$ & $3064$  \\
\cline{4-8}
& & & $10$ & $72.8$ & $12$ & $84.8$ & $3052$  \\
\hline

\multirow{3}{*}{\parbox{2cm}{FOM model}} & \multirow{3}{*}{OOT} & \multirow{3}{*}{N/A}
 & $20$ & $95.4$ & $48$ & $143.4$ & $3048$ \\
\cline{4-8}
& & & $15$ & $56.2$ & $48$ & $104.2$ & $3044$  \\
\cline{4-8}
& & & $10$ & $34.8$ & $48$& $82.8$ & $3040$ \\
\hline

\end{tabular} 

}
 \caption{Computation cost for verification process of the full order original LTI system and its output abstractions using SpaceEx~\cite{frehse2011spaceex} in which $T_1$ is the time for SpaceEx to compute the reach set of the output abstraction; $T_2$ is the time for obtaining the output abstraction; ``Total Time'' column states for the total time of verification process for the output abstraction, ``Memory'' column presents the memory used for computing reach set which is measure in kilobyte; time is measured in second. The terms of ``N/A'' and ``OOT'' mean ``not applicable'' and ``out of time''.} 
\tablabel{computation}
\end{table*}

To evaluate the order-reduction abstraction method presented in this paper, we implemented a software prototype that automatically creates output abstractions from full-order systems and applied it to a set of benchmarks. The method is integrated in HyST by calling Matlab related functions.\footnote{The prototype implementation and SpaceEx model files for the examples evaluated, both before and after order reduction, are available at: \url{http://verivital.com/hyst/pass-order-reduction/}.}
In this section, we first evaluate the advantages and disadvantages of our method in computing the error bound and its performance. Our results are compared with the results produced by the approximate bisimulation relations method~\cite{girard2007approximate} and the simulation-based approach~\cite{han2004reachability} via several benchmarks presented in~\tabref{benchmark}.
Then, we consider in detail how to apply our method to verify the safety of two specific case studies. 

\paragraph*{Error bound and computation time evaluation.~~}
The experiments are using Matlab 2014a and SpaceEx on a personal computer with the following configuration: Intel (R) Core(TM) i7-2677M CPU at 1.80GHz, 4GB RAM, and 64-bit Window 7. We set the upper limit for Matlab simulation and SpaceEx running time as two hours. It is said to be out of time (OOT) if we can not get the result after two hours.

\tabref{experiment} presents the error bounds and computation times of different methods on some typical benchmarks.
The approximate bisimulation method proposed in~\cite{girard2007approximate} is integrated in the Matlab toolbox called MATISSE.
The simulation-based method proposed in~\cite{han2004reachability} is done automatically in this paper.  
The results of our method are presented separately as follows. 
In the first part named ``Mixed bound'', we compute the bound of $e_1$ using~\thmref{e1_bound_opt} and the bound of $e_2$ using simulation. 
In the second part named ``Theoretical bound'', the bounds of $e_1$ and $e_2$ are computed using~\thmref{e1_bound} and~\thmref{e2_bound} respectively. 
We remind that, for $p$-output MIMO system, the simulation-based method and our proposed method compute separately the error bound for each pair of output (i.e. $\left\| y^i-y_r^i \right\|,~1\leq i \leq p$) while the approximate bisimulation method computes the total error bound (i.e. $\left\| y-y_r \right\|$). 

Let us consider the bound of $e_1$ related to the initial set of states $X_0$ that is computed using the two different techniques proposed in this paper.
For the helicopter and partial differential equation benchmarks, the initial set of states $X_0$ is far from the zero point.
The bounds of $e_1$ computed using~\thmref{e1_bound} is large and too conservative which may not be useful. 
For the motor control system and building model benchmarks, the initial set of states $X_0$ is close to the zero point. 
The bounds of $e_1$ computed by~\thmref{e1_bound} are fairly good and acceptable.
We can see that the bounds of $e_1$ computed using~\thmref{e1_bound_opt} is much smaller than the ones computed by~\thmref{e1_bound} for any situation of $X_0$.   

Now, we analyze the bound of $e_2$ computed by~\thmref{e2_bound} where the effect of Hankel singular values on this bound as mentioned in~\rmref{e2_rm} can be illustrated.
For the PDE benchmark, it can be seen that the theoretical bounds of $e_2$ for all cases of the output abstraction's dimension $k$ are very small due to the fact that the Hankel singular values $\sigma_k$ (which are not presented here) of the corresponding balanced system are very small (almost equal to zero) as $k \geq 5$. 
We can see more clearly the effect of these Hankel singular values by looking at the helicopter benchmark.   
The theoretical bound of $e_2$ becomes larger when the lower dimension output abstraction is obtained. 
It is small as $k$ equal to $20$ because $\sum_{21}^{28}\sigma_j$ is small. The theoretical bound of $e_2$ becomes conservative as $k = 10$ since $\sum_{11}^{28}\sigma_j$ is large. 
It can be shown that the bound of $e_2$ computed using simulation method is much less conservative than the theoretical bound.
Although~\thmref{e2_bound} may give conservative result for some systems, it is still useful for some other systems as analyzed above. 
The benefit of the theoretical bound is we can calculate the bound very quickly without doing simulation and thus avoid the numerical issues in simulation-based methods. 

We have discussed the benefits and drawbacks of different techniques proposed in this paper. 
Now, we make a short comparison with the approximate bisimulation relation method~\cite{girard2007approximate} and simulation-based method~\cite{han2004reachability}.
As can be seen from~\tabref{experiment}, the simulation-based approach gives very tight bounds for the errors (for examples, the motor control system and helicopter benchmarks). 
This approach is powerful when dealing with systems having small number of vertices in the initial set. 
When the number of vertices increases, the number of simulations also grows exponentially as can be seen from~\tabref{experiment}.
Therefore, it is difficult to apply the simulation-based approach in this situation (e.g, PDE, ISS and FOM benchmarks). For the approximate bisimulation relation method (integrated in Matisse toolbox), it can be observed that for PDE benchmark, this approach can give a good error bound. However, for MCS and Helicopter benchmarks, this approach gives very conservative results (which may not be useful) due to the appearance of ill-conditioned matrices in the process of solving LMI and optimization problems. We can see that for all the benchmarks on which the approximate bisimulation relation method can be applied, combination of using~\thmref{e1_bound_opt} and simulation bound of $e_2$ (i.e. mixed bound) produces much less conservative error bounds. When the dimension of the system is large (e.g, as in the ISS and FOM benchmarks), while the approximate bisimulation approach and~\thmref{e1_bound_opt} give no results due to running out of time, our theoretical approach can still be applied. 

Toward the computation time of different methods, we can see from the table that our method using~\thmref{e1_bound} and \thmref{e2_bound} has smallest computing time while using ~\thmref{e1_bound_opt} and approximate bisimulation relation method require much more time to compute the error bound.

In summary, we can use different methods to compute the error bound between the full-order system and the output abstraction.
Each method has benefits and drawbacks. The time complexity and the conservativeness of the result is a tradeoff that we need to take into account when applying these method to a specific system. As a suggestion from doing the experiment for this paper, for a system having dimension under $100$, we can generally use~\thmref{e1_bound_opt}, approximate bisimulation relation method~\cite{girard2007approximate} or simulation-based approach~\cite{han2004reachability} to compute the error bound. For systems with more than $100$ dimensions, we can use~\thmref{e1_bound} and~\thmref{e2_bound} or combine~\thmref{e1_bound} (for determining $e_1$ bound) and simulation-based approach (for computing $e_2$ bound).      

We have evaluated the error bounds and computation times of different methods. Next, we discuss about the benefit of using output abstraction for safety verification.~\tabref{computation} shows the computation cost of the verification process for the full-order LTI benchmarks and their different output abstractions. The bounded times for running all SpaceEx models are set as $t_f = 20s$. In the table, $T_1$ is the time for SpaceEx to compute the reach set of the output abstraction; $T_2$ is the time for obtaining the output abstraction; ``Total Time'' column states for the total time of verification process for the output abstraction, ``Memory'' column presents the memory used for computing reach set which is measure in kilobyte; all times are measure in second. For the first three systems (MCS, helicopter and BM), we combine~\thmref{e1_bound_opt} (for determining $e_1$ bound) and simulation-based approach (for computing $e_2$ bound) to derive the output abstraction. For the rest three benchmarks, we use~\thmref{e1_bound_opt}(for determining $e_1$ bound) and simulation-based approach (for computing $e_2$ bound) to obtain the output abstraction. As shown in the table, although using output abstraction does not help much to reduce the memory used in verification, it can help to reduce significantly the computation time. Moreover, output abstraction can be applied to check the safety of high-dimensional systems (e.g. PDE, ISS and FOM) that can not be verified directly using existing verification tools.       
Next, we consider the whole process of using output abstraction to verify the safety of two specific systems. 

\paragraph*{International Space Station (ISS).}
~The full-order model (denoted by $M_{270}$) of the component $R1$ of the international space station has $270$ state variables, three inputs and three outputs.
We refer reader to~\cite{Antoulas01asurvey} for the state space model of the system, and it is also included in our supplementary materials. The initial condition, input constraints and safety specification of the ISS system are presented in~\tabref{constraint}.

Verification for the full-order system with $270$ state variables may be difficult for existing verification tools.
Output abstraction and safety specification transformation can help to verify safety of such high-dimensional system with a small computation cost.
There are different output abstractions that can be used to verify whether the full-order system satisfies its safety requirements.
In this paper, we use a 10-order output abstraction and the corresponding transformed safety specification to check the safety of the full-order system.
From the safety requirement of the full-order system and the error bound shown in~\tabref{experiment}, we can see that the theoretical bound of $e_2$ is too conservative and cannot be used. 
To overcome this problem, we combine the theoretical bound of $e_1$ and the simulation bound of $e_2$ to derive a better bound between the full-order system and its 10-order output abstraction (denoted by $M^{\delta}_{10}$).
The error bound $\delta$ from this combination is $\delta = 10^{-3} \times [0.44,~0.28,~0.3]^T$.

The safety specification of the full-order ISS system $S_{270}$ is visualized by the region inside the middle blue polytopes in~\figref{iss_SP}. The transformed safety specifications (safe and unsafe specifications) of the corresponding 10-order output abstraction respectively are the region inside the smallest blue polytopes and the region outside the red polytopes. 


Figures~\ref{verify_S_12},~\ref{verify_S_23} and~\ref{verify_S_13} present the safety specification transformation and output reach set in the period of time $[0,~20s]$ computed by SpaceEx~\cite{frehse2011spaceex} of the $10$-order output abstraction on 2-dimension axes.

In the figures, the regions inside the middle blue polygons are the 2-dimensions projected safety regions of the full-order system.
The corresponding projected transformed safety and unsafe specifications $S_{10}^{\delta}$, $U_{10}^{\delta}$ of the output abstraction are described by the regions inside the smallest blue polygons and the regions outside the red polygons respectively.
The reach set $R_{ij}^{\delta},~i\neq j,~ 1\leq i,j \leq 3$ for each pair output $(y_{r_i},y_{r_j})$ of the abstraction $M_{10}^{\delta}$ are depicted by the solid blue regions.
As shown in the figures, for all $(i,j)$, we have $R_{ij}^{\delta}\cap \neg S_{10}^{\delta} = \emptyset$, or in other words, $M_{10}^{\delta} \vDash S_{10}^{\delta}$, thus it can be concluded that the full-order system $M_{270}$ satisfies the safety requirement $S_{270}$.
Therefore, the full-order system is safe.

\paragraph*{Periodically switched synchronous motor position control system.}
~We have applied our method for safety verification of a high-dimensional LTI system above.
Next, we consider how to use the proposed method to verify safety of a periodically switched synchronous motor system, which is used widely in many industrial fields such as elevator control systems, robotics and conveyer control systems.
In this system, two motors are controlled synchronously and periodically in both directions (i.e. clockwise and counterclockwise) to keep their position distance remaining in a desired range.

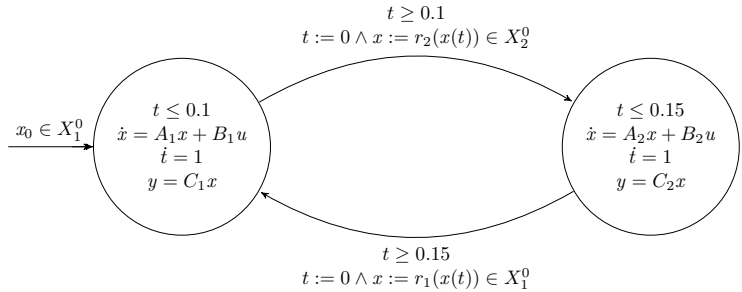
\begin{figure}[b]%
  \vspace{-1em}%
	\centering%
	\begin{adjustbox}{max size={0.8\columnwidth}{0.75\textheight}}%
	\begin{tikzpicture}[>=stealth',shorten >=1pt,auto,node distance=8cm,font=\normalsize]
		\tikzstyle{every state}=[minimum size=3cm,font=\normalsize]
		\node[state] (try)      			{\makecell[c]{$t \leq 0.15$\\$\dot{\mathit{x}} = A_2\mathit{x}+B_2\mathit{u}$\\$\dot{t} = 1$\\$\mathit{y} = C_2\mathit{x}$}};
			\node[initial,state,initial text={}]		 (rem) [left of=try]	 {\makecell[c]{$t \leq 0.1$\\$\dot{\mathit{x}} = A_1\mathit{x}+B_1\mathit{u}$\\$\dot{t} = 1$\\$\mathit{y} = C_1\mathit{x}$}};
			\path[->]			(rem)	edge[bend left] node{\makecell[c]{$t \geq 0.1$ \\ $ t := 0 \wedge \mathit{x} := r_2(\mathit{x}(t))\in X_2^0$}} (try);
			\path[->]			(try)	edge[bend left] node{\makecell[c]{$t \geq 0.15$ \\ $ t := 0 \wedge \mathit{x} := r_1(\mathit{x}(t))\in X_1^0$}} (rem);
			\draw[<-] (rem) -- node[above] {$\mathit{x}_0 \in X_1^0$} ++(-3cm,0);
	\end{tikzpicture}%
	\end{adjustbox}%
	\caption{Hybrid automaton model of the PSS synchronous motor control system.}%
	\figlabel{PSS_motor_sys}%
	\vspace{-1.25em}%
\end{figure}

Two motors have the same parameters with the motors used in Carnegie Mellon's undergraduate controls lab.
Each motor has its own controller which is designed to guarantee that: (a) the overshoot of the output does not exceed $16\%$; (b) the settling time is less than $0.04s$; (c) No steady-state error, even in the presence of a step disturbance input.
The system denoted by $M_8$ is modeled as a PSS with two modes as depicted in~\figref{PSS_motor_sys} in which two motors are controlled to rotate clockwise in mode 1 and inversely in mode 2.
The operating time in mode 1 is $t_1= 0.1$ and the operating time in mode 2 is $t_2 = 0.15$.

The system's matrices in the two modes are given by:
\begin{equation*}
\begin{split}
A_0 &= \begin{bmatrix} 0 & 1 & 0 & 0 \\ 0 & -1.0865 & 8487.2 & 0  \\ -2592.1 & -21.1190 & -698.9135 & -141390 \\ 1 & 0 & 0 & 0 \\ \end{bmatrix}, \\
B_0 &= \begin{bmatrix} 0 & 0 & 0 &-1 \\\end{bmatrix}^T, \\
A_1 &= A_2 = \begin{bmatrix} A_0 & 0 \\ 0 & A_0 \end{bmatrix}, B_1 = - B_2 = \begin{bmatrix} B_0 & 0 \\ 0 & B_0 \end{bmatrix},\\
C_1 &= C_2 = \begin{bmatrix} 1 & 0 & 0 & 0 & 0 & 0 & 0 & 0 \\ 1 & 0 & 0 & 0 & -1 & 0 & 0 & 0 \end{bmatrix}.
\end{split}
\end{equation*}

The reference control input applied to the system is $u = [u_1~u_2],~0.16 \leq u_1 \leq 0.2,~ 0.16 \leq u_2 \leq 0.22$.
The initial set of states of the system in the two modes are defined by the hyperbox:
\begin{equation*}
\begin{split}
&X^o_1 = \{x \in \mathbb{R}^{8} |~lb_1^i \leq x(i) \leq ub_1^i,  1 \leq i \leq 8 \}, \\
&X^o_2 = \{x \in \mathbb{R}^{8} |~lb_2^i \leq x(i) \leq ub_2^i,  1 \leq i \leq 8 \},
\end{split}
\end{equation*}
where $(lb_1,ub_1)$ and $(lb_2,ub_2)$ are initial conditions given in~\tabref{4}.

The first output of the system indicates the position of the first motor while the second output represents the position distance between the two motors.
In order to make the system operate safely, the two motors are controlled synchronously so that the first motor position $y_1$ and the position error between the two motors $y_2$ do not reach unsafe regions defined by $U(M_8) = \{(y_1,y_2) \in \mathbb{R}^2 |~ 178(y_1-0.325)^2 + 625(y_2-0.16)^2 \leq 1,~ 178(y_1+0.325)^2 + 625(y_2+0.16)^2 \leq 1 \}$.
The unsafe regions of the full-order system are visualized by the regions inside the smallest red ellipses in \figref{PSS_reach_set}.

To verify safety of the full-order ($8$-dimensional) system, we use a $5$th-order output abstraction $M_5^{\delta}$ and its transformed safety specification.

The matrices for the output abstraction denoted in mode 1 and mode 2 respectively are:
%
\begin{equation*}%
\begin{split}
A^{r}_1 &= A^{r}_2 = \begin{bmatrix} -18.925 & 80.823 & 0 & 0 & -29.973 \\ -80.823 & -76.569 & 0  &  0 & 122.93  \\ 0  & 0 & -18.925 & -80.823 & 0 \\  0  & 0 & 80.823 & -76.569  & 0 \\ -29.973 & -122.93  & 0 & 0 & -194.95 \\ \end{bmatrix}, \\
B^{r}_1 &= -B^{r}_2 = \begin{bmatrix} 5.7806 & 7.3762 & 2.2080 & -2.8175 & 4.8964 \\ -3.5726 & -4.5587 & 3.5726 & -4.5587  & -3.0262 \end{bmatrix}^T, \\
C^{r}_1 &= C^{r}_2 = \begin{bmatrix} 3.5726  & -4.5587  &  3.5726  &  4.5587  &  3.0262 \\  5.7806  & -7.3762  & -2.2080  & -2.8175  &  4.8964 \\ \end{bmatrix}.
\end{split}
\end{equation*}%

The transformed initial set of states the output abstraction in the two modes are defined by the hyperbox:
\begin{equation*}
\begin{split}
\widehat{X}^o_1 = \{x_r \in \mathbb{R}^{5} |~lb_{r1}^i \leq x_r^i \leq ub_{r1}^i,  1 \leq i \leq 5 \}, \\
\widehat{X}^o_2 = \{x_r \in \mathbb{R}^{5} |~lb_{r2}^i \leq x_r^i \leq ub_{r2}^i,  1 \leq i \leq 5 \},
\end{split}
\end{equation*}
where $(lb_{r1}^i, ub_{r1}^i)$ and $(lb_{r2}^i, ub_{r2}^i)$ are given in~\tabref{4}.

We combine the optimization (for $e_1$ bound) and simulation (for $e_2$ bound) methods to determine the error bounds between the full-order system and its 5-order output abstraction.
The error bounds in mode 1 and mode 2 respectively are $\delta_1 = [0.0234~~0.0189]^T$ and $\delta_2 = [0.0228~~0.0177]^T$. 
Using error bounds, the transformed unsafe specification for the output abstraction denoted by $U(M_5^{\delta})$ is: $U(M_5^{\delta}) =\{(y_1,y_2) \in \mathbb{R} |~ 178(y_1-0.325)^2 + 625(y_2-0.16)^2 \leq 1.57^2,~ 178(y_1+0.325)^2 + 625(y_2+0.16)^2 \leq 1.57^2 \}$.
The unsafe regions for the output abstraction are the regions inside the largest red ellipsoids in~\figref{PSS_reach_set}.

To ensure safety of the system, the output abstraction must not violate its transformed unsafe specification $U(M_5^{\delta})$. 
From~\figref{PSS_reach_set}, we can see that the output reach set of the output abstraction has an empty intersection with the unsafe regions, so we can conclude that the full-order system is safe.

\begin{figure}[t]
	\centering
		\includegraphics[width=0.8\columnwidth]{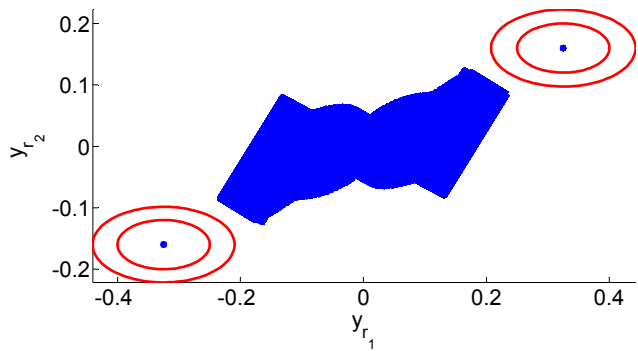}
	  \caption{Output reachable set in the period of time $[0,~20s]$ of the 5th-order output abstraction of the PSS synchronous motor position control system. The reach set does not reach the unsafe region (the region inside the largest red ellipse), thus the output abstraction is safe (with a bounded time interval), and thus the full-order system is safe (with a bounded time interval).}%
	  \figlabel{PSS_reach_set}%
		\vspace{-1em}%
\end{figure}

\begin{table}[t]%
	\tiny%
	\centering%
	\begin{tabular}{|l|l|}
  \hline
  Vector  & Value \\
	\hline
  $lb_1$  & [-0.002~~0~~0~~0~~-0.001~~0~~0~~0$]^T$  \\
  $ub_1$  & [0.0025~~0~~0~~0~~0.002~~0~~0~~0$]^T$  \\
	$lb_2$  & [-0.001~~0~~0~~0~~-0.002~~0~~0~~0$]^T$  \\
  $ub_2$  & [0.001~~0~~0~~0~~0.003~~0~~0~~0$]^T$  \\
  $lb_{r1}$  & [-0.1373e-03~~-0.5137e-03~~-0.0586e-03~~-0.2277e-03~~-0.2235e-03$]^T$  \\
  $ub_{r1}$  & [0.1323e-03~~0.5332e-03~~0.0930e-03~~0.3610e-03~~0.2320e-03$]^T$  \\
	$lb_{r2}$  & [-0.1211e-03~~-0.3684e-03~~-0.0687e-03~~-0.2666e-03~~-0.1603e-03$]^T$  \\
  $ub_{r2}$  & [0.0949e-03~~0.4703e-03~~0.0949e-03~~0.3684e-03~~0.2046e-03$]^T$  \\
  \hline
\end{tabular}
	\caption{Initial condition vectors of synchronous motor control system and its 5th-order output abstraction.}%
	\tablabel{4}
	\vspace{-2em}
\end{table}

\section{Conclusion and Future Work}
\seclabel{conclusion}
We have proposed an approach to verify safety specifications in high-dimensional linear systems and a class of periodically switched systems (PSSs) by verifying transformed safety specifications of a lower-dimensional output abstraction using existing hybrid system verification tools.
By reducing the dimensionality, our method significantly reduces the time and memory of reachability computations in the verification process.

There are several interesting directions for future work.
First, the method for calculating the error bound corresponding to the zero input response (i.e. $||e_1||$) can only be used for stable LTI systems.
Thus, a more general approach needs to be developed to deal with unstable linear systems.
It is also important to find a general strategy to address the verification problem for high-dimensional nonlinear systems.

Additionally, our approach can be extended to more general hybrid systems.
The main idea is that the states in each location that are related to guards/invariants need to be declared as the outputs of that location.
Then, the output abstraction for each location can be obtained.
A new hybrid system is then constructed based on these output abstractions.
The guards/invariants of the new hybrid system are obtained by transforming the former guards/invariants of the original hybrid system in the same manner of safety specifications transformation proposed in this paper.
This approach may benefit from other notions of ``similarity'' between behaviors (executions) of systems such as discrepancy functions~\cite{duggirala2013emsoftB}, or conformance degree~\cite{abbas2014formal}.

\bibliographystyle{spr-chicago} 
\bibliography{Tran_lib,deds2016}
\normalsize

\balance

\clearpage

\section{Appendix}
%
\subsection{Appendix: Proofs of Theorems}
\applabel{Appendix_alg}
In this appendix, we present proofs of theorems presented in this paper.

\subsubsection{Proof of~\thmref{e1_bound}}

The basic idea of determining the theoretical bound of the first error $e_1$ relies on the concept of monotonic convergence defined as follows. 

\begin{definition}
A homogeneous stable system $\dot{x} = Ax$ is called monotonic convergent if its states converge to zero and satisfy $\parallel x(t) \parallel \leq \parallel x(0) \parallel,~ \forall t \geq 0$.
\end{definition}

\begin{lemma}
\lemlabel{mono_convergence}
A homogeneous stable system $\dot{x} = Ax$ is a monotonic convergent system if $A+A^{T}<0$. 
\end{lemma}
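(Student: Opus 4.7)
The plan is to use the classical quadratic Lyapunov function $V(x) = \|x\|^2 = x^T x$ and show that the condition $A + A^T < 0$ forces $V$ to be strictly decreasing along every nontrivial trajectory, which gives both the stability of the homogeneous system and the desired monotonic bound $\|x(t)\| \le \|x(0)\|$ in one shot.

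First I would differentiate $V$ along solutions of $\dot x = Ax$. Using the product rule and the dynamics, I get
\begin{equation*}
\dot V(x(t)) = \dot x(t)^T x(t) + x(t)^T \dot x(t) = x(t)^T(A^T + A)\, x(t).
\end{equation*}
The hypothesis $A + A^T < 0$ (negative definiteness) then yields $\dot V(x(t)) \le 0$ for all $t \ge 0$, with strict inequality whenever $x(t) \neq 0$.

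Next I would integrate this inequality from $0$ to $t$ to obtain $V(x(t)) \le V(x(0))$, i.e.\ $\|x(t)\|^2 \le \|x(0)\|^2$, and hence $\|x(t)\| \le \|x(0)\|$ for every $t \ge 0$, which is precisely the monotonic-convergence norm inequality in the definition. The stability-and-convergence-to-zero part of the definition then follows from the standard Lyapunov argument: $V$ is a positive definite, radially unbounded function whose derivative is negative definite along trajectories, so by Lyapunov's direct method the origin is globally asymptotically stable and $x(t) \to 0$.

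No real obstacle is anticipated here; this is essentially a textbook Lyapunov computation with $P = I$ serving as the Lyapunov matrix, and the negative-definiteness of $A + A^T$ is precisely the Lyapunov inequality $A^T P + P A < 0$ for that choice. The only thing worth being slightly careful about is separating the two conclusions we need from the definition, namely (a) asymptotic convergence to zero and (b) the \emph{monotone} norm bound; both drop out of the single computation above, so the proof should be quite short.
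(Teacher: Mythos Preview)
Your proposal is correct and follows essentially the same route as the paper: the paper also takes $V(x)=x^Tx$, computes $\dot V(x)=x^T(A+A^T)x<0$, and concludes $\|x(t)\|^2\le\|x(0)\|^2$. The only minor difference is that the paper assumes stability in the hypothesis and so does not separately argue $x(t)\to 0$, whereas you derive it from the same Lyapunov computation; this is harmless.
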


\begin{proof}
Choose the Lyapunov function $V(x(t)) = x(t)^Tx(t)$, we have:
\begin{equation*}
\dot{V}(x(t))= x(t)^T(A+A^T)x(t)< 0.
\end{equation*}
Therefore, $V(x(t)) = \left\| x(t) \right\|^{2} \leq V(x(0))= \left\| x(0) \right\|^{2},~\forall t \geq 0$. This completes the proof.
\end{proof}

\textit{Proof of~\thmref{e1_bound}:}

We will show that the uncontrolled augmented system (i.e. $u = 0$) is a monotonic convergent system.
To proof above statement, let consider the uncontrolled-balanced system: $\dot{\tilde{x}} = \tilde{A}\tilde{x}$. 
Since the system is balanced, we have: 

\begin{equation*}
\begin{split}
\tilde{A}\Sigma+\Sigma\tilde{A}^{T}+BB^{T} &= 0\\
\tilde{A}^{T}\Sigma+\Sigma\tilde{A}+C^{T}C &= 0.
\end{split}
\end{equation*}

Combining two above equations yields: 

\begin{equation*}
(\tilde{A}+\tilde{A}^{T})\Sigma + \Sigma(\tilde{A}+\tilde{A}^{T}) = - BB^{T} - C^{T}C.
\end{equation*}

It is easy to see that the real parts of all eigenvalues of $\tilde{A}+\tilde{A}^{T}$ are necessarily non-positive. 
Since $\tilde{A}+\tilde{A}^{T}$ is symmetric, it is non-positive.
Note that $\tilde{A}$ is asymptotically stable.
Thus, using~\lemref{mono_convergence}, we can conclude that the uncontrolled-balanced system is a monotonic convergent system.

Similarly, we can see that the uncontrolled-reduced system $\dot{x}_r = A_rx_r$ is also a monotonic convergent system. 
Since $\tilde{A}+\tilde{A}^{T} < 0$ and $A_r + A_r^T < 0$, we have $\bar{A} + \bar{A}^T < 0$, that means the uncontrolled augmented system is a monotonic convergent system. 

Using the monotonic convergent property, the bound of the error $e_1$ satisfies: 

\begin{equation*}
\begin{split}
\left\| e_1^i(t) \right\|^2 & = \left\| \bar{y}(i) \right\|^2 = \bar{x}^T\bar{C}_i^T\bar{C}_i\bar{x} \\ 
                        &\leq \lambda_{max}(\bar{C}_i^T\bar{C}_i) \left\| \bar{x} \right\|^2   \\ 
												&\leq  \lambda_{max}(\bar{C}_i^T\bar{C}_i) \left\| \bar{x}_0 \right\|^2 \\ 											 &\leq  \lambda_{max}(\bar{C}_i^T\bar{C}_i) \cdot \sup_{x_0 \in X_0} \left\| \bar{x}_0 \right\|^2,~1 \leq i \leq p. 
\end{split}
\end{equation*}

This completes the proof.

\subsubsection{Proof of~\thmref{e1_bound_opt}}

Consider the uncontrolled augmented system (i.e. $u = 0$), let $V(\bar{x}(t)) = \bar{x}(t)^TP\bar{x}(t)$, we have $\dot{V}(\bar{x}(t)) = \bar{x}(t)^T(A^TP+PA)\bar{x}(t)$. 

Assume $P_0$ is the solution of the optimization problem in~\thmref{e1_bound_opt}. Because of $(A^TP_0+P_0A) < 0$, then $V(x(t)) < V(x(0) = \bar{x}_0^TP_0\bar{x}_0$. 
Note that $\left\| e_1^i(t) \right\|^2 = \bar{x}^T\bar{C}_i^T\bar{C}_i\bar{x},~ 1 \leq i \leq p$.
Since we also have $\bar{C}_i^T\bar{C}_i \leq P_0$, the bound of the error satisfies $\left\|e_1^i(t) \right\| \leq \sqrt{\bar{x}_{0}^T {P_0} \bar{x}_0}$

This completes the proof.

\subsubsection{Proof of~\thmref{e2_bound}}

The theoretical bound of the second error $e_2$ can be derived straightforwardly using the concept of bounded input bounded output stability and the $L_1$ error bound in impulse response of balanced truncation model reduction \cite{obinata2012model}. 
From \eqref{error}, we have: 

\begin{equation*}
\begin{split}
|e_2(t)| &= |\tilde{y}_u - y_{r_u}| = |\int_{0}^{t}(\tilde{C}e^{\tilde{A}(t-\tau)}\tilde{B} - C_{r}e^{A_{r}(t-\tau)}B_{r})u(\tau)d\tau | \\ 
      & \leq \int_{0}^{t}|(\tilde{C}e^{\tilde{A}(t-\tau)}\tilde{B} - C_{r}e^{A_{r}(t-\tau)}B_{r})||u| d\tau \\ 
			& \leq \left\| u \right\|_\infty \cdot \int_{0}^{\infty}|(\tilde{C}e^{\tilde{A}(t-\tau)}\tilde{B} - C_{r}e^{A_{r}(t-\tau)}B_{r})|d\tau \\
			& \leq \left\| u \right\|_\infty \cdot (2\sum_{j=k+1}^{n}(2j-1)\sigma_j).
\end{split}
\end{equation*}

Thus, $\left\| e_2^i(t) \right\| \leq \left\| u \right\|_\infty \cdot (2\sum_{j=k+1}^{n}(2j-1)\sigma_j)$ which completes the proof.

\subsubsection{Proof of~\lemref{3}}
%
%
From the definition of output abstraction, we have:
\begin{equation*}
\alpha_{ij}y_{r_j} - |\alpha_{ij}|\delta_j \leq \alpha_{ij}y_j \leq \alpha_{ij}y_{r_j} + |\alpha_{ij}|\delta_j.
\end{equation*}

\begin{equation*}
\Rightarrow \Gamma y_r + \overline{\Psi}_2 \leq \Gamma y + \Psi \leq \Gamma y_r + \overline{\Psi}_1.
\end{equation*}

Thus, $S(M_{k}^{\delta})$ and $U(M_{k}^{\delta})$ defined by~\eqref{SP_red_order_polytopes} satisfy the safety relation~\eqref{safety_relation}, which completes the proof.
%

\subsubsection{Proof of~\lemref{4}}

Let $\bar{y} = E(y-a)$, $\bar{y}_r = E(y_r - a)$.
We have:
\begin{equation} \eqlabel{proof_1}
\begin{split}
&(y-a)^TQ(y-a) = \bar{y}^T\Lambda\bar{y} = \sum_{i=1}^p\lambda_i\bar{y}_i^2, \\
&(y_r-a)^TQ(y_r-a) = \bar{y}_r^T\Lambda\bar{y}_r = \sum_{i=1}^p\lambda_i\bar{y}_{r_i}^2. \\
\end{split}
\end{equation}

From the definition of output abstraction (\defref{output_abstraction}), it is easy to see that:%
\begin{equation}\eqlabel{proof_2}
\begin{split}
-&\bar{\delta}_i \leq \bar{y}_i - \bar{y}_{r_i} = E(i,:)(y - y_r) \leq \bar{\delta}_i, \\
&\bar{\delta}_i = \sum_{j=1}^p|\gamma_{ij}|\delta_j. 
\end{split}
\end{equation}

Using~\eqref{proof_2} and the Cauchy-Schwarz inequality yields:
\begin{equation}\eqlabel{proof_3}
\begin{split}
&\sum_{i=1}^p\lambda_i(\bar{y}_i - \bar{y}_{r_i})^2 \leq \Delta_R^2 = \sum_{i=1}^p\lambda_i\bar{\delta}_i^2, \\ 
&\sum_{i=1}^p2\lambda_i\bar{y}_{r_i}(\bar{y}_i - \bar{y}_{r_i}) \leq 2\Delta_R\sqrt{\sum_{i=1}^p\lambda_i\bar{y}_{r_i}^2},\\
&\sum_{i=1}^p2\lambda_i\bar{y}_i(\bar{y}_{r_i} - \bar{y}_i) \leq 2\Delta_R\sqrt{\sum_{i=1}^p\lambda_i\bar{y}_i^2}.
\end{split}
\end{equation}

Combining the first and second inequality of~\eqref{proof_3} leads to:
\begin{equation}\eqlabel{proof_4}
\sum_{i=1}^p\lambda_i\bar{y}_i^2 \leq (\sqrt{\sum_{i=1}^p\lambda_i\bar{y}_{r_i}^2} + \Delta_R)^2. 
\end{equation}

Similarly, combining the first and the third inequality of~\eqref{proof_3} yields:
\begin{equation}\eqlabel{proof_5}
\sum_{i=1}^p\lambda_i\bar{y}_{r_i}^2 \leq (\sqrt{\sum_{i=1}^p\lambda_i\bar{y}_i^2} + \Delta_R)^2. 
\end{equation}

From~\eqref{proof_1},~\eqref{proof_4}, and~\eqref{proof_5}, we have:
\begin{equation}\eqlabel{proof_6}
\begin{split}
\sqrt{(y-a)^TQ(y-a)} \leq \sqrt{(y_r-a)^TQ(y_r-a)} + \Delta_R, \\
\sqrt{(y-a)^TQ(y-a)} \geq \sqrt{(y_r-a)^TQ(y_r-a)} - \Delta_R.
\end{split}
\end{equation}

Using~\eqref{proof_6}), we can conclude that $S(M_k^{\delta})$ and $S(M_k^{\delta})$ defined in~\lemref{4} 
satisfy the safety relation~\eqref{safety_relation}, which completes the proof.

\end{document}